\DeclareMathOperator{\ParallelTransportBars}{\mathit{\big\|}}
\newcommand{\VecF}[1]{\mathfrak{X}\left(#1\right)}
\newcommand{\PT}[4]{\ParallelTransportBars_{#1 \rightarrow #2}^{#3}#4}
\newcommand{\limucp}{\mathop{\lim}\limits_{\substack{\mathrm{ucp} \\ \epsilon \rightarrow 0}}}
\newcommand{\hittingtime}[1]{\tau^h_{#1}}
\newcommand{\exittime}[1]{\tau^e_{#1}}
\newcommand{\hittingtimeproc}[2]{\tau^h_{#1}\left(#2\right)}
\newcommand{\exittimeproc}[2]{\tau^e_{#1}\left(#2\right)}
\newcommand{\stopproc}[2]{#1^{|#2}}
\newcommand{\sem}[1]{\mathscr{S}\left(#1\right)}
\newcommand{\lifetime}[1]{\xi_{#1}}
\newcommand{\quadvar}[2]{\left\{#1, #2\right\}}
\newcommand{\Sint}[2]{\int #1\circ d #2}
\newcommand{\Iint}[2]{\int #1 d^I #2}
\newcommand{\Sinttime}[4]{\int_{#1}^{#2} #3\circ d #4}
\newcommand{\Iinttime}[4]{\int_{#1}^{#2} #3 d^I #4}
\newcommand{\interior}[1]{\mathrm{int\:}#1}
\newcommand{\dotp}[2]{\left\langle #1, #2 \right\rangle}
\newcommand{\Sinttimedotp}[4]{\int_{#1}^{#2}\dotp{#3}{\circ d #4}}
\newcommand{\ind}[1]{\mathbb{1}_{#1}}
\newcommand{\pont}[1]{\mathcal{P}#1}
\newcommand{\pontbun}[1]{T#1\oplus T^*#1}
\newcommand{\pard}[2]{\frac{\partial#1}{\partial#2}}
\newcommand{\Ac}[2]{\mathcal{S}_{#1}(#2)}
\newcommand{\Dual}[1]{#1^{\vee}}
\newcommand{\Strat}[2]{\mathrm{Strat}(#1, #2)}
\newcommand{\pr}[1]{\mathrm{pr}_{#1}}
\newcommand{\Proj}[1]{\mathrm{Pr}_{#1}}
\newcommand{\hitexittime}[1]{\tau^{(h,e)}_{#1}}
\begin{document}
\title{On Stochastic Variational Principles}
\author{Archishman Saha}
\date{}
\def\tsph{\mathbb{S}^2}
\def\E{\mathbb{E}}
\def\P{\mathcal{P}}
\def\L{\mathcal{L}}
\def\expec{\mathbb{E}}
\def\Rp{\mathbb{R}_+}
\def\lag{\mathfrak{g}}
\def\alag{\tilde{\mathfrak{g}}}
\def\dlag{\mathfrak{g}^*}
\def\Ver{\text{Ver}}
\def\Hor{\text{Hor}}
\def\dalag{\tilde{\mathfrak{g}}^*}
\def\ker{\text{ker}}
\def\im{\text{im}}
\def\sthm{(M,\{\cdot,\cdot\},h,X)}
\def\hamsec{\tilde{X}_H}
\def\poin{Poincar\'e\:}
\def\pb{\{,\cdot,\}}
\def\Cinc{C^{\infty}_c}
\def\R{\mathbb{R}}
\def\pman{(M, \{\cdot,\cdot\})}
\def\sman{(M, \Omega)}
\def\Bvec{\mathbf{B}}
\def\Jvec{\mathbf{J}}
\def\cotman{(T^*M, \Omega)}
\def\inv{^{-1}}
\def\tman{(TM, \Omega)}
\def\Ito{It\^o\:}
\def\d{\mathbf{d}}
\def\D{\mathcal{D}}
\def\qvec{\mathbf{q}}
\def\pvec{\mathbf{p}}
\def\lvec{\mathbf{\Lambda}}
\def\deldeleps{\frac{\partial}{\partial \epsilon}\Big|_{\epsilon = 0}}
\def\Jvec{\mathbf{J}}
\def\Avec{\mathbf{A}}
\def\dd{\mathbf{d}d}
\def\cotM{T^*M}
\def\cotQ{T^*Q}
\def\Qvec{\mathbf{Q}}
\def\Pvec{\mathbf{P}}
\def\Vvec{\mathbf{V}}
\def\uvec{\mathbf{u}}
\def\vvec{\mathbf{v}}
\def\Cin{C^{\infty}}
\def\Ck{C^k}
\def\F{\mathbb{F}}
\def\G{\mathcal{G}}
\def\E{\mathbb{E}}
\def\stin{\int\left<\Theta,\D X\right>}
\def\omsec{\tilde{\Omega}}
\def\omsym{\Omega^{\text{sym}}}
\def\Sym{\text{Sym}^2}
\def\stpM{\tau_p M}
\def\kepphase{T^*(\mathbf{R}^3\setminus\{0\})}
\def\Ad{\text{Ad}}
\def\ad{\text{ad}}
\def\scpM{\tau^*_p M}
\def\del{\circ d}
\def\pb{\{\cdot,\cdot\}}
\def\scM{\tau^* M}
\def\scN{\tau^* N}
\def\scqN{\tau^*_q N}
\def\stM{\tau M}
\def\lco{L\'{a}zaro-Cam\'{\i}\:and\:Ortega\:}
\def\Deldeleps{\frac{D}{D\epsilon}\Big|_{\epsilon = 0}}
\def\Zvec{\mathbf{Z}}
\def\stN{\tau N}
\def\dim{\mathrm{dim}\:}
\def\grad{\mathbf{\nabla}}
\def\covar{\delta^A}
\def\stqN{\tau_q N}
\def\grad{\mathbf{\nabla}}
\def\ostar{\textcircled{$\star$}}
\def\ostarthm{\textup{\textcircled{$\star$}}}
\def\Xvec{\textbf{X}}
\def\TxS{T^\times S^3_{np}}
\def\ToneS{T^{\times}_1S^3_{np}}
\def\M{\mathcal{M}}
\newtheorem{thm}{Theorem}[section]
\newtheorem{cor}{Corollary}[section]
\newtheorem{lem}{Lemma}[section]
\newtheorem{defn}{Definition}[section]
\newtheorem{example}{Example}[section]
\newtheorem{prop}{Proposition}[section]
\newtheorem{rem}{Remark}[section]
\newtheorem{theorem}{Theorem}[section]
\newtheorem{remark}{Remark}[section]
\newtheorem{proposition}{Proposition}[section]
\newtheorem{lemma}{Lemma}[section]
\newtheorem{corollary}{Corollary}[section]
\newtheorem{definition}{Definition}[section]
\newtheorem{conjecture}{Conjecture}[section]
\maketitle
\allowdisplaybreaks

\begin{abstract}
The study of stochastic variational principles involves the problem of constructing fixed-endpoint and adapted variations of semimartingales. We provide a detailed construction of variations of semimartingales that are not only fixed at deterministic endpoints, but also fixed at first entry times and first exit times for charts in a manifold. We prove a stochastic version of the fundamental lemma of calculus of variations in the context of these variations. Using this framework, we provide a generalization of the stochastic Hamilton-Pontryagin principle in local coordinates to arbitrary noise semimartingales. We also formulate a stochastic analogue of Noether's theorem in this context. For the corresponding global form of the stochastic Hamilton-Pontryagin principle, we introduce a novel approach to global variational principles by restricting to semimartingales obtained as solutions of Stratonovich equations. 
\end{abstract}

\section{Introduction}

Variational principles are ubiquitous in mechanical systems. At its heart, these principles involve finding a curve that extremizes an action integral among all curves with fixed endpoint conditions. While introducing noise in the framework of mechanics, one is naturally tempted to extend deterministic variational principles to the stochastic regime.  Two distinct kinds of stochastic variational principles exist in the literature: the first involves a stochastic action obtained by perturbing a deterministic Lagrangian by external noise; see, for instance, Street and Takao \cite{street2023}, \lco \cite{lco1}, Bou-Rabee and Owhadi \cite{bou2009stochastic}, Holm \cite{holm1}, Arnaudon et. al. \cite{holm2}, Cruzeiro et. al. \cite{holm3} and Crisan and Street \cite{crisan}.  The second is a deterministic action, evaluated by averaging a stochastic integral obtained from a single deterministic Lagrangian acting on stochastic paths. This viewpoint is present, for instance, in the works of Yasue \cite{yasue}, Cipriano and Cruzeiro \cite{cipriano2007}, Arnaudon and Cruzeiro \cite{anabela}, Arnaudon, Chen and Cruzeiro \cite{arn}, Zambrini \cite{zambrini1986variational} and Huang and Zambrini \cite{huangzambrini}. This also provides a probabilistic interpretation of Feynman's path integral approach to quantum mechanics and the reader is referred to Zambrini \cite{zambrini2014researchprogramstochasticdeformation} for more details on this.

\medskip

In this paper we will focus on the first viewpoint. Here the action is defined via a Stratonovich integral. While the Stratonovich integral behaves well under coordinate transformations, it poses some analytic difficulties. As remarked in Emery \cite{emerybook}, unlike \Ito integrals, a dominated convergence theorem is not true for Stratonovich integrals. Thus, in general, it is not true that if $(\Gamma_n)$ is a sequence of semimartingales that converge pointwise to a semimartingale $\Gamma$ and are dominated by a locally bounded process then the Stratonovich integrals $\int \Gamma_n\del X$ converges almost surely to $\int \Gamma\del X$ for any semimartingale $X$ uniformly on compacts in probability (ucp). This means that if $\Gamma$ and $X$ are real-valued semimartingales and $\{\Gamma_{\epsilon}\}$ is a family of semimartingales such that $\Gamma_{0,t} = \Gamma_t$ and the maps $\epsilon\mapsto \Gamma_{\epsilon,t}(\omega)$ are smooth for almost every sample point $\omega$, then one cannot conclude directly that the Stratonovich integral $\Sint{\frac{\Gamma_\epsilon - \Gamma}{\epsilon}}{X}$ converges ucp to $\Sint{\deldeleps\Gamma_{\epsilon}}{X}$. To overcome this difficulty, we assume differentiability in terms of the semimartingale topology as opposed to ucp topology. 

\medskip

Introducing stochasticity also leads to local and global difficulties in variational principles. The global issue involves fixing the final condition in the variational principle. In general, fixing a stochastic process to assume a certain distribution at a future time may lead to breakdown in adaptedness with respect to a given filtration. This leads to the problem of constructing fixed endpoint and adpated variations of a stochastic process $\Gamma$. 

\medskip

Broadly, two distinct solutions exist to this problem in the literature. They differ in the choice of the final time in the variational principle. The first involves fixing a compact set $K$ that contains the initial condition $\Gamma_0 = a$, for some point $a$ in the manifold, and fixing the final time to be the first exit time $\tau_K$ of $\Gamma$ from $K$. One then defines a vector field $X$ such that $X$ vanishes on $\{a\}\cup \partial K$ and constructs a variational family that yields the variation $\delta \Gamma = X(\Gamma)$. This ensures that $\delta \Gamma$ equals $0$ at $t = 0$ and at $t = \tau_K$. This approach is present in the works of \lco \cite{lco1} and Street and Takao \cite{street2023}. While suited to the stochastic environment, it is not  clear how this technique applies in the simpler deterministic set-up.

\medskip

The second approach involves constructing a variation by parallel transporting a deterministic curve $v(t)$ in the tangent space of $a$ such that $v(0) = v(T) = 0$ for some $T>0$, along the process $\Gamma$. This is used in case of geodesically complete manifolds, for instance, in the works of Arnaudon, Chen and Cruzeiro \cite{arn} for Lie groups, and Huang and Zambrini \cite{huangzambrini} for compact manifolds. The adaptedness of the variation is ensured by placing future conditions only on deterministic objects. Note that in this approach the final time is fixed and independent of the process. A related approach which considers a fixed final time is the use of Malliavin calculus. We refer the interested reader to Cruzeiro et. al. \cite{holm3} for more details. While this approach does not involve choosing a compact set and it accounts for a deterministic final time $T$, it involves more structure on the manifold.

\medskip

 Fixing the final time $t = T$ in the stochastic setting leads to local problems. To elaborate on this, first consider the case of deterministic Euler-Lagrange equations. Recall that the proof of the equivalence of Hamilton's principle and Euler-Lagrange equations in manifolds, using partial derivatives of the form $\frac{\partial L}{\partial q}$ and $\frac{\partial L}{\partial \dot{q}}$ (see, for instance, Marsden and Ratiu \cite{marsden2}, Theorem 8.1.3 and its proof) proceeds in local coordinates, by dividing the curve $q(t)$ into a finite number of segments, each of which lies in a chart. For any chart $U$ such that $U$ has a non-empty intersection with the curve $q(t)$, one can find a time interval $[t_0, t_1]$ such that $q(t)$ lies in $U$ for $t\in [t_0, t_1]$. Roughly speaking, global fixed endpoint problems, with fixed initial and final times, lead to local fixed endpoints problems with fixed inital and final times.

\medskip

But this is not the case if one considers a semimartingale $\Gamma$ instead of a deterministic curve. Spatial localization of a semimartingale in charts leads to temporal localization within stopping times. This means that if $U\subseteq M$ is a chart, the first hitting time of $\Gamma$ in $U$ and the first exit time from $U$ are (random) stopping times. Thus, globally considering a fixed endpoint problem with a final time $T>0$ does not lead to fixed endpoint problems locally with a fixed final time. Therefore, should we want to do local computations with fixed endpoint conditions it is necessary that we construct variations that vanish at $t= 0$, $t = T$, as well as at the first hitting time and the first exit time. The first two conditions are necessary since $U$ may contain the initial condition, or the exit time from $U$ may exceed $T$.

\medskip

A main objective of this paper is to introduce variations of semimartingales that vanish not only at initial and final deterministic times but also at the first hitting and exit times for a chart in a manifold. This allows us to do variational principles in local coordinates on a manifolds. 

\medskip

We also describe a novel method for working with variational principles globally on manifolds. We exploit the fact that Stratonovich equations on manifolds are determined by Stratonovich operators and these are deterministic generalizations of vector fields. By restricting the action integral to solutions of Stratonovich equations and assuming the noise to be an arbitrary semimartingale, we reformulate the problem of finding critical points of a stochastic action to determining a Stratonovich operator. Since the Stratonovich operator is a deterministic object, this suggests that the problem is solvable by deterministic arguments. We demonstrate that this is indeed the case, and in fact, this method allows us to bypass some of the complications that arise in the local case while working globally. 

\medskip

This paper is structured as follows: after reviewing some terminologies and notations from stochastic calculus in Section 2, we introduce variations of semimartingales in Section 3. We prove a stochastic analogue of the fundamental lemma of the calculus of variations as well for Stratonovich integrals, especially taking into account variations that vanish at the first hitting and exit times for a chart. In Section 4 we turn our focus on the stochastic Hamilton-Pontryagin principle. The stochastic Hamilton-Pontryagin principle was formulated by Bou-Rabee and Owhadi \cite{bou2009stochastic} and studied more recently by Street and Takao \cite{street2023}. As an application of the variational framework developed in Section 3, a proof of the local form of the stochastic Hamilton-Pontryagin principle is presented. This generalizes the Hamilton-Pontryagin principle formulated in \cite{street2023} to arbitrary noise semimartingales. We also discuss a stochastic version of Noether's theorem on the variational principle side. Then we discuss the intrinsic form of the stochastic Hamilton-Pontryagin principle by working at the level of Stratonovich operators.

\section{Notations and Terminologies from Stochastic Calculus}
We will always consider continuous semimartingales defined on a probability space $(\Omega, \mathcal{F}, P)$. If $S,T$ are predictable stopping times then we define 
\begin{align*}
    [[S,T]] &= \{(\omega, t)\in \Omega\times[0, \infty)\:|\:S(\omega)\leq t\leq T(\omega)\}\\
    [[S,T[[ &= \{(\omega, t)\in \Omega\times[0, \infty)\:|\:S(\omega)\leq t< T(\omega)\}\\
    ]]S,T]] &= \{(\omega, t)\in \Omega\times[0, \infty)\:|\:S(\omega)<t\leq T(\omega)\}\\
    ]]S,T[[ &= \{(\omega, t)\in \Omega\times[0, \infty)\:|\:S(\omega)< t< T(\omega)\}.
\end{align*}The set of semimartingales on a manifold $M$ will be denoted by $\sem{M}$. At times we will slightly abuse notation and write $\Gamma_t$ to refer to a semimartingale $\Gamma$. The lifetime of $\Gamma$ will be denoted by $\lifetime{\Gamma}$. For simplicity, unless otherwise mentioned, semimartingales will be assumed to have infinite lifetime. Given a semimartingale $\Gamma$ we let $\Gamma = \Gamma_0 + M_{\Gamma}+ A_{\Gamma}$ denote the Doob-Meyer decomposition of $\Gamma$, where $M_{\Gamma}$ is a local martingale and $A_{\Gamma}$ is a finite variation process. Given a Borel set $A$ in a manifold $M$ and a semimartingale $\Gamma$, we define the \textbf{first hitting time} for $A$ as the random variable \[\hittingtime{A}(\Gamma)(\omega) = \inf\{t\in [0, \infty]|\:\:\Gamma(\omega)\in A\}\]and the \textbf{first exit time} from $A$ as the random variable \[\exittime{A}(\Gamma)(\omega) = \inf\{t\in [0, \infty]|\:\:\Gamma(\omega)\notin A\}.\]$\hittingtime{A}(\Gamma)$ and $\exittime{A}(\Gamma)$ are stopping times and they are predictable stopping times if $A$ is a closed set (see, for example, Emery \cite{emerybook}). If the process $\Gamma$ is understood from context then we will use the notation $\hittingtime{A}$ and $\exittime{A}$ for first hitting time and first exit time, respectively. We will also define $\tau^{(h,e)}_A(\Gamma)$ or $\tau^{(h,e)}_A$ to be $\exittimeproc{A}{\Gamma_{t+\hittingtimeproc{A}{\Gamma}}}$. Assuming $A$ is closed, by definition of $\tau^{(h,e)}_A$, $\Gamma$ takes its values in $A$ in $[[\hittingtime{A},\tau^{(h,e)}_A + \hittingtime{A}]]$ and if $\tau_1$ and $\tau_2$ are stopping times such that $\Gamma$ takes its values in A in $[[\tau_1, \tau_2]]$, then $[[\tau_1, \tau_2]]\subseteq [[\hittingtime{A}, \tau^{(h,e)}_A + \hittingtime{A}]]$. Thus $\Gamma|_{[[\hittingtime{A},\tau^{(h,e)}_A + \hittingtime{A}]]}$ is the portion of $\Gamma$ contained in $A$. If $\tau$ is a stopping time, we define the stopped process $\stopproc{\Gamma}{\tau}$ by $\stopproc{\Gamma}{\tau}_t(\omega) = \Gamma_{t\wedge T(\omega)}(\omega)$, where $t\wedge T(\omega)$ denotes the minimum of $t$ and $T(\omega)$, for all $\omega\in \Omega$. 

\medskip

Given $\Gamma\in\sem{M}$ and a locally bounded predictable $\cotM$-valued process $Z$ over $\Gamma$ (that is, $Z$ projects to $\Gamma$), the Stratonovich integral of $Z$ along $\Gamma$ is denoted by $\Sint{Z}{\Gamma}$ and the \Ito integral of $Z$ along $\Gamma$ is denoted by $\Iint{Z}{\Gamma}$. If $\alpha$ is a 1-form on $M$ and $Z = \alpha(\Gamma)$ then $\Sint{\alpha}{\Gamma}:= \Sint{Z}{\Gamma}$ and $\Iint{\alpha}{\Gamma}:= \Iint{Z}{\Gamma}$. The reader is referred to Emery \cite{emerybook} for more details on Stratonovich and \Ito integrals.

\medskip

We will refer to Arnaudon and Thalamier \cite{arnaudon1998} for the topology of semimartingales on $\sem{M}$ and the topology of uniform convergence on compacts in probability (ucp) on the space $\mathbf{D}_c(M)$ of $M$-valued continuous, adapted processes. Endow $\Ck(M)$ with the topology of uniform convergence of compacts upto derivatives of order $k$. 
\begin{definition}
    Let $M, N$ be smooth manifolds and $E = \Ck(M)\times\mathbf{D}_c(M)$ or $\Ck(M)\times\sem{M}$ or $\mathbf{D}_c(M)$ or $\sem{M}$, and $F = \mathbf{D}_c(N)$ or $\sem{N}$. A map $\phi:E\rightarrow F$ is said to be \textbf{lower semicontinuous} if for every sequence $(X_n)$ in $E$ converging to $X\in E$, the sequence $\stopproc{\phi(X_n)}{\lifetime{\phi(X)}-}$ converges to $\phi(X)$
\end{definition}
\begin{remark}
    If we assume that semimartingales have infinite lifetime then $\phi$ is lower semicontinuous if and only if $\phi$ is continuous.
\end{remark}
\noindent For the proof of the next lemma we refer to Proposition 2.6 in \cite{arnaudon1998}.
\begin{lemma}\label{lowersemicontinuitylemma}
    \begin{enumerate}
        \item The map \begin{align*}
    \Ck(M)\times\mathbf{D}_c(M)  &\to \mathbf{D}_c(\R)  \\
    (f,X) &\mapsto f(X)
    \end{align*} is lower semicontinous.
    \item The maps \begin{align*}
    \Ck(M)\times\sem{M}  &\to \sem{\R}  \\
    (f,X) &\mapsto f(X),
    \end{align*}where $k\geq 2$, and
    \begin{align*}
        \sem{\R^n}&\to\sem{\R}\\
        X = (X^1, \cdots, X^n)&\mapsto M_{X^i} \:\mathrm{or\:} A_{X^i}\:\mathrm{or\:} \{M_{X^i}, M_{X^j}\},
    \end{align*}
    where $\{\cdot,\cdot\}$ denotes the quadratic variation, is lower semicontinous, are lower semicontinuous.
    \end{enumerate}
\end{lemma}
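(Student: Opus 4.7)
My plan is to reduce both parts to pointwise estimates in local coordinates, relying on two basic principles: continuous adapted processes almost surely have relatively compact images on bounded time intervals, and the semimartingale topology is strictly stronger than ucp.

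For part (1), I would fix a horizon $T>0$ and tolerances $\varepsilon,\delta>0$. Starting with a convergent sequence $(f_n,X_n)\to(f,X)$ in $\Ck(M)\times\mathbf{D}_c(M)$, almost-sure continuity of $X$ lets me choose a compact $K\subseteq M$ and a relatively compact open neighborhood $V\supseteq K$ with $P(X([0,T])\subseteq K)>1-\delta$. The ucp convergence $X_n\to X$ then forces $P(X_n([0,T])\subseteq V)>1-2\delta$ for $n$ large. On $\overline{V}$ the sequence $f_n$ converges uniformly to $f$ and $f$ is uniformly continuous, so the triangle inequality
\[
|f_n(X_n)_t-f(X)_t|\le |f_n(X_n)_t-f(X_n)_t|+|f(X_n)_t-f(X)_t|
\]
yields uniform smallness on $[0,T]$ on an event of probability at least $1-3\delta$. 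Sending $\delta$ and $\varepsilon$ to zero gives ucp convergence $f_n(X_n)\to f(X)$.

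For part (2), the decomposition map $X\mapsto(M_{X^i},A_{X^i},\{M_{X^i},M_{X^j}\})$ on $\sem{\R^n}$ is essentially built into the definition of the semimartingale topology via Emery's metric, which controls the martingale part, the finite-variation part, and the bracket on every compact time interval; continuity is then immediate from the construction. For the composition $(f,X)\mapsto f(X)$ on $\Ck(M)\times\sem{M}$ with $k\ge 2$, I would localize by stopping times, choosing a chart $V\subseteq M$ and stopping at the first exit times $\exittime{V}(X)$ and $\exittime{V}(X_n)$. In local coordinates, Ito's formula gives
\[
f(X)_t=f(X_0)+\int_0^t \partial_i f(X)\,d^I X^i+\tfrac{1}{2}\int_0^t \partial_{ij} f(X)\,d\{M_{X^i},M_{X^j}\},
\]
and analogously for $f_n(X_n)$. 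Semimartingale-topology convergence $X_n\to X$ descends to coordinates as $X_n^i\to X^i$; by part (1), $\partial_i f_n(X_n)\to \partial_i f(X)$ and $\partial_{ij}f_n(X_n)\to \partial_{ij}f(X)$ in ucp, and continuity of Ito integration and of the bracket against a semimartingale-convergent sequence then delivers $f_n(X_n)\to f(X)$ on the stopped processes. Patching over a countable atlas then yields the global statement.

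I anticipate the main obstacle to be this patching step, since global semimartingale convergence must be inferred from convergence of processes stopped at exit times whose distributions depend on the particular $X_n$. The remedy is to work with a common increasing family of stopping times $\tau_k\uparrow\infty$ a.s., built from the chart structure of $M$, and to invoke prelocalizability of the semimartingale topology. This is mostly bookkeeping rather than hard analysis, but it is where one must be cautious about the interplay between the hitting and exit times of $X$ and those of $X_n$.
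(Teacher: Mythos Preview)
The paper does not give its own proof of this lemma: immediately before the statement it says ``For the proof of the next lemma we refer to Proposition 2.6 in \cite{arnaudon1998}.'' So there is no in-paper argument to compare against; your proposal is effectively an attempt to reproduce the content of that external reference.

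Your outline is the standard route and is broadly sound: ucp continuity of composition via compact localization for part (1), and for part (2) localization in charts plus It\^o's formula, combined with the continuity of the Doob--Meyer pieces and brackets that is built into Emery's semimartingale topology. Two places deserve more care than you give them. First, the sentence ``continuity of It\^o integration and of the bracket against a semimartingale-convergent sequence then delivers $f_n(X_n)\to f(X)$'' hides a genuine bilinear continuity statement: you need that if $H_n\to H$ in ucp with the $H_n$ uniformly locally bounded and $Y_n\to Y$ in the semimartingale topology, then $\int H_n\,d^I Y_n\to\int H\,d^I Y$ in the semimartingale topology. This is true, but it requires a splitting $\int H_n\,dY_n-\int H\,dY=\int(H_n-H)\,dY_n+\int H\,d(Y_n-Y)$ together with a uniform bound on the $H_n$ and the defining property of Emery's metric; it is not automatic. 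Second, your patching step is exactly where the ``lower semi-'' in lower semicontinuity enters: when lifetimes are finite, exit times of $X_n$ from charts need not converge to those of $X$, and one only obtains convergence of the stopped processes $\stopproc{f_n(X_n)}{\lifetime{f(X)}-}$. Your proposed remedy via a common exhausting sequence of stopping times and prelocalizability is the right idea, but be explicit that this yields only the lower-semicontinuous conclusion in general, with full continuity recovered under the paper's standing infinite-lifetime assumption (Remark after Definition 2.1).
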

\begin{corollary}\label{convergenceofproducts}
     Let $\{X_n\}$ and $\{Y_n\}$ be two sequences of real-valued semimartingales. Suppose $X_n\rightarrow X$ and $Y_n \rightarrow Y$ in $\sem{\R}$, where $X$ and $Y$ are semimartingales. Then $X_nY_n\rightarrow XY$ in $\sem{\R}$.
\end{corollary}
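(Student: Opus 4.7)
The plan is to apply the second part of Lemma \ref{lowersemicontinuitylemma} to the smooth multiplication map $m \in C^{\infty}(\R^2)$ defined by $m(x,y) = xy$. Under the standing assumption that all semimartingales have infinite lifetime, the remark following the definition of lower semicontinuity tells us that the lower semicontinuity statement in Lemma \ref{lowersemicontinuitylemma} is ordinary continuity in the semimartingale topology.

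First I would verify that the hypothesis yields joint convergence $(X_n, Y_n) \to (X, Y)$ in $\sem{\R^2}$. This is a compatibility statement: componentwise convergence in $\sem{\R}$ implies convergence of the vector-valued process in $\sem{\R^n}$. It follows from the definition of the semimartingale topology in \cite{arnaudon1998}; concretely, one checks that for any bounded predictable $\R^2$-valued process $(H_1, H_2)$, the stochastic integral $\int H_1\, dX_n + \int H_2\, dY_n$ converges ucp to $\int H_1\, dX + \int H_2\, dY$, which reduces to the componentwise hypothesis and the linearity of the ucp topology.

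Second, regarding $m \in C^{\infty}(\R^2)$ as an element of $C^k(\R^2)$ for any $k \geq 2$, the first continuity statement in part (2) of Lemma \ref{lowersemicontinuitylemma}, applied with the fixed function $m$ and the convergent sequence $(X_n, Y_n) \to (X, Y)$ in $\sem{\R^2}$, yields $m(X_n, Y_n) \to m(X, Y)$ in $\sem{\R}$. Since $m(X_n, Y_n) = X_n Y_n$ and $m(X, Y) = XY$, this is exactly the desired conclusion.

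The only non-routine point is the passage from componentwise convergence to joint semimartingale convergence in $\sem{\R^2}$; this is built into the construction of the semimartingale topology in \cite{arnaudon1998} rather than being a substantive obstacle, and no further calculation is needed once it is recorded.
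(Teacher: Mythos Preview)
Your proof is correct and follows essentially the same approach as the paper: both arguments view multiplication as a smooth map $\R^2\to\R$, pass from componentwise convergence to joint convergence in $\sem{\R^2}$, and then invoke the lower semicontinuity statement of Lemma~\ref{lowersemicontinuitylemma}. You are in fact slightly more careful than the paper, which writes $C^0(\R^2)$ where Lemma~\ref{lowersemicontinuitylemma}(2) requires $k\geq 2$; your explicit handling of the componentwise-to-joint step is also a welcome addition.
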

\begin{proof}
    Let $\phi:C^0(\R^2)\times \sem{\R^2}\rightarrow \sem{\R}$ denote the map $(h,Z)\mapsto h(Z)$. Then $\phi$ is lower semicontinuous. Define $h\in C^0(\R^2)$ by $h(x,y) = xy$ and set $h_n = h$, $Z_n = (X_n, Y_n)$ and $Z = (X,Y)$. Since $X_n\rightarrow X$ and $Y_n \rightarrow Y$ in $\sem{\R}$, it follows that $Z_n \rightarrow Z$ in $\sem{\R^2}$. Then $(h_n,Z_n)$ converges to $(h,Z)$ in $C^0(\R^2)\times \sem{\R}$ and hence $h_n(Z_n) = X_nY_n\rightarrow h(Z) = XY$ in $\sem{\R}$. This completes the proof.
\end{proof}
\noindent The next proposition also follows from Lemma \ref{lowersemicontinuitylemma}. 
\begin{proposition}\label{convergencestrat}
    Let $(Z_n)$ be a sequence in $\sem{\R}$ that converges to $Z\in \sem{\R}$ and is dominated by a locally bounded predictable process $K$. If $X\in\sem{\R}$ then $\Sint{Z_n}{X}\xrightarrow{ucp} \Sint{Z}{X}$. 
\end{proposition}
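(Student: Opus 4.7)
The strategy is to reduce the Stratonovich integral to an It\^o integral via the identity
\[
    \Sint{Z_n}{X} = \Iint{Z_n}{X} + \frac{1}{2}\quadvar{Z_n}{X},
\]
and analogously for $Z$, after which it suffices to show ucp convergence of each piece separately.

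For the quadratic covariation, the argument is essentially a direct appeal to Lemma \ref{lowersemicontinuitylemma}. Since $Z_n \to Z$ in $\sem{\R}$ and the constant sequence $X$ trivially converges to $X$, the pair $(Z_n, X)$ converges to $(Z,X)$ in $\sem{\R^2}$. Under the infinite-lifetime convention in force throughout the paper, the remark following the definition of lower semicontinuity implies that the map $(Y^1,Y^2) \mapsto \quadvar{M_{Y^1}}{M_{Y^2}}$ supplied by Lemma \ref{lowersemicontinuitylemma} is in fact continuous. Because the quadratic covariation of continuous semimartingales depends only on the martingale parts, one concludes $\quadvar{Z_n}{X} \to \quadvar{Z}{X}$ in $\sem{\R}$, hence ucp.

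For the It\^o integral term, I would invoke the stochastic dominated convergence theorem for It\^o integrals: if $(H_k)$ is a sequence of predictable processes with $|H_k| \le K$ for a locally bounded predictable $K$, and $H_k \to H$ pointwise in $(\omega,t)$, then $\Iint{H_k}{X} \to \Iint{H}{X}$ ucp. The main obstacle is that the hypothesis supplies only convergence in the semimartingale topology rather than the pointwise convergence demanded by the DCT. I would bridge this gap by the standard subsequence trick: convergence in $\sem{\R}$ entails ucp convergence, so every subsequence of $(Z_n)$ admits a further subsequence $(Z_{n_k})$ converging uniformly on compacts almost surely, and hence pointwise. Combined with the domination $|Z_{n_k}| \le K$, the DCT yields $\Iint{Z_{n_k}}{X} \to \Iint{Z}{X}$ ucp. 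Since every subsequence of $(\Iint{Z_n}{X})$ then admits a sub-subsequence converging ucp to the same limit $\Iint{Z}{X}$, the full sequence converges ucp, and adding the two pieces gives the claim.
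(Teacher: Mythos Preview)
Your proof is correct and follows essentially the same route as the paper: decompose the Stratonovich integral into the It\^o integral plus quadratic covariation, handle the latter via Lemma \ref{lowersemicontinuitylemma}, and the former via the It\^o dominated convergence theorem. The paper simply invokes the DCT in one line without comment, whereas you add the subsequence argument to pass from semimartingale (hence ucp) convergence to the pointwise convergence the DCT requires; this extra care is justified but does not constitute a different approach.
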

\begin{proof}
    By definition $\Sint{Z_n}{X} = \Iint{Z_n}{X}+ \quadvar{Z_n}{X}$. The first term converges to $\Iint{Z}{X}$ in ucp by the \Ito dominated convergence theorem (see Emery \cite{emerybook}). Since convergence in the semimartingale topology implies convergence in ucp (see Arnaudon and Thalmaier \cite{arnaudon1998}), it suffices to show that $\quadvar{Z_n}{X}\rightarrow \quadvar{Z}{X}$ in $\sem{\R}$. Note that $\quadvar{Z_n}{X} = \quadvar{M_{Z_n}}{M_X}$. Let $Y_n = (M_{Z_n}, M_{X})\in \sem{\R^2}$. By Lemma \ref{lowersemicontinuitylemma}, $M_{Z_n}\rightarrow M_Z$ in $\sem{\R}$ and hence $Y_n \rightarrow Y:= (M_Z, M_X)$ in $\sem{\R^2}$. Again by Lemma \ref{lowersemicontinuitylemma}, we see that $\quadvar{Z_n}{X}=\quadvar{M_{Z_n}}{M_X}\rightarrow \quadvar{M_Z}{M_X}= \quadvar{Z}{X}$ in $\sem{\R}$. This completes the proof.
\end{proof}
We also recall the definition of a Stratonovich operator and a Stratonovich equation from Emery \cite{emerybook}.
\begin{definition}
    Let $N$ and $M$ be smooth manifolds. 
    \begin{enumerate}
        \item A \textbf{Stratonovich operator} $S$ from $N$ to $M$ is a family of linear maps \[\left\{S(x,y):T_xN\to T_yM\:|\:x\in N,\:y\in M\right\}\]smoothly depending on $x$ and $y$. The set of Stratonovich operators from $N$ to $M$ will be denoted by $\Strat{N}{M}$.
        \item Given $S\in\Strat{N}{M}$ and a semimartingale $X$ on $N$ a \textbf{solution} of the \textbf{Stratonovich equation}
        \begin{equation}\label{strateq}
            \del \Gamma = S(X,\Gamma)\del X
        \end{equation}
        is a semimartingale $\Gamma$ in $M$ that satisfies 
        \begin{equation}\label{strateqsol}
            \int \alpha \del \Gamma = \int S^{\vee}(X,\Gamma)\alpha\del X
        \end{equation}
        for every 1-form $\alpha$ on $M$. Here $\Dual{S}(x,y): T^*_yM\rightarrow T^*_xN$ denotes the dual of the linear map $S(x,y)$. If we want to explicitly refer to the semimartingale $X$ then we will denote the solution of \eqref{strateq} by $\Gamma_X$.
    \end{enumerate}
\end{definition}
We refer the reader to \cite{emerybook} for further details on Stratonovich equations, and in particular, for positive results on existence and uniqueness of solutions. 

\medskip

Given $x\in N, v\in T_xN$ and $S\in\Strat{N}{M}$ we obtain a vector field $S^{x,v}$ on $M$ given by $S^{x,v}(y) = S(x,y)(v)$. On the other hand let $V_1, \cdots, V_n$ are vector fields on $M$. Let $(e_1,\cdots, e_n)$ denote any basis of $\R^n$ define $S\in\Strat{\R^n}{M}$ by setting $S(x,y)(v^1, \cdots, v^n) = \sum_{i = 1}^nv^iV_i(y)$ for all $x\in\R^n, y\in M$ and $(v^1\cdots, v^n)\in \R^n \cong T_x\R^n$. Then $S^{x,e_i} = V_i$ for every $i = 1, \cdots, n$. 

\section{Variations of a Semimartingale}
In this section we describe variations of a semimartingale $\Gamma$ in a smooth manifold $M$. 
\begin{definition}\label{definitionofvariations}Let $\Gamma$ be a semimartingale in a smooth manifold $M$. A \textbf{deformation} of $\Gamma$ is a map $[-s,s]\rightarrow \sem{M}$ denoted by $\epsilon\mapsto \Gamma_{\epsilon}$, where $\epsilon\in [-s,s]$ for some $s>0$, such that:\begin{itemize}\item $\Gamma_{\epsilon}$ is a semimartingale for all $\epsilon>0$.\item $\Gamma_{0,t} = \Gamma_t$.\item The map $\epsilon \mapsto \Gamma_{\epsilon,t}$ is smooth for almost every path of $\Gamma$. Additionally, there exists a $TM$-valued semimartingale $\delta \Gamma$ such that for every $f\in\Cin(M)$, $\frac{f(\Gamma_{\epsilon}) - f(\Gamma)}{\epsilon} \rightarrow df(\delta \Gamma)$ in $\sem{\R}$ as $\epsilon\rightarrow 0$. The semimartingale $\delta\Gamma$ will be called a \textbf{variation} of $\Gamma$.\end{itemize}\end{definition}
\begin{remark}
    It is assumed implicitly that the lifetimes of the semimartingales $\Gamma_{\epsilon}$, are at least as large as the lifetime of $\Gamma$. 
\end{remark}
\begin{remark}
    Using Definition 2.9 in Arnaudon and Thalmaier \cite{arnaudon1998}, the definition of $\delta\Gamma$ implies that $\Gamma_{\epsilon}$ converges to $\delta \Gamma$ with respect to the semimartingale topology on $M$. 
\end{remark}
\begin{definition}\label{admissibility}
    Let $M$ be a smooth manifold and $\Gamma$ be a semimartingale in $M$. We say that $\Gamma$ is \textbf{admissible} if, for every semimartingale $Y$ in $TM$ over $\Gamma$, there exists a deformation $\epsilon\mapsto \Gamma_{\epsilon}$ of $\Gamma$ with $\delta \Gamma = Y.$
\end{definition}
\begin{theorem}
    Assume that $\Gamma$ is a semimartingale in a Riemannian manifold $M$ and $\exp$ denote the exponential map on $M$. If $\exp_{\Gamma_{t}(\omega)}$ has domain $T_{\Gamma_t(\omega)}M$ for all $t\geq 0$ and $\omega\in \Omega$ then $\Gamma$ is admissible.
\end{theorem}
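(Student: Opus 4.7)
The plan is to construct the deformation explicitly via the Riemannian exponential map and verify it satisfies the two analytic conditions in Definition \ref{definitionofvariations}. Given a semimartingale $Y$ in $TM$ over $\Gamma$, I define
\[
\Gamma_\epsilon := \exp_{\Gamma}(\epsilon Y), \qquad \epsilon \in \mathbb{R}.
\]
The hypothesis that $\exp_{\Gamma_t(\omega)}$ has domain $T_{\Gamma_t(\omega)}M$ for every $t,\omega$ ensures this is well-defined for every $\epsilon$. Since $\exp : TM \to M$ is smooth and $Y$ is a semimartingale, $\Gamma_\epsilon$ is a semimartingale in $M$ for every $\epsilon$. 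The identities $\Gamma_{0,t} = \Gamma_t$ and the pathwise smoothness of $\epsilon \mapsto \Gamma_{\epsilon,t}(\omega)$ follow immediately from $\exp_p(0)=p$ and smoothness of $\exp$.

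The nontrivial content is showing that for every $f \in C^\infty(M)$,
\[
\frac{f(\Gamma_\epsilon) - f(\Gamma)}{\epsilon} \;\longrightarrow\; df(Y) \quad \text{in } \sem{\mathbb{R}}
\]
as $\epsilon \to 0$. First I would apply Taylor's theorem in $\epsilon$ to the smooth function $F : \mathbb{R} \times TM \to \mathbb{R}$ defined by $F(\epsilon, v) = f(\exp_{\pi(v)}(\epsilon v))$, where $\pi : TM \to M$ is the projection. Since $F(0,v) = f(\pi(v))$ and $\partial_\epsilon F(0,v) = df_{\pi(v)}(v)$, there is a smooth remainder $G : \mathbb{R} \times TM \to \mathbb{R}$ with
\[
F(\epsilon,v) = f(\pi(v)) + \epsilon\, df_{\pi(v)}(v) + \epsilon^2\, G(\epsilon,v).
\]
Evaluating at $v = Y$ gives
\[
\frac{f(\Gamma_\epsilon) - f(\Gamma)}{\epsilon} = df(Y) + \epsilon\, G(\epsilon, Y).
\]

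Next I would prove $\epsilon\, G(\epsilon, Y) \to 0$ in $\sem{\mathbb{R}}$. For each $\epsilon$, the function $G(\epsilon,\cdot) \in C^\infty(TM)$ and the family $\{G(\epsilon,\cdot)\}$ converges to $G(0,\cdot)$ in $C^k(TM)$ (uniform convergence on compacts together with all derivatives up to order $k$) as $\epsilon \to 0$. Lemma \ref{lowersemicontinuitylemma} applied to the map $C^k(TM) \times \sem{TM} \to \sem{\mathbb{R}}$, $(h,Z) \mapsto h(Z)$, then yields $G(\epsilon, Y) \to G(0,Y)$ in $\sem{\mathbb{R}}$. Treating $\epsilon$ as a constant (hence semimartingale) converging to $0$ in $\sem{\mathbb{R}}$, Corollary \ref{convergenceofproducts} gives $\epsilon \cdot G(\epsilon, Y) \to 0 \cdot G(0,Y) = 0$ in $\sem{\mathbb{R}}$. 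Combining with the Taylor identity above finishes the verification that $\delta \Gamma = Y$.

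The main obstacle is the semimartingale-topology convergence of the remainder term: the ucp-level convergence would be routine by a pathwise Taylor estimate, but the semimartingale topology is genuinely stronger and requires control over the martingale part, the finite variation part, and the quadratic variation. The Taylor-plus-Lemma \ref{lowersemicontinuitylemma} route circumvents estimating these individually by encoding smooth dependence on $\epsilon$ as $C^k$-convergence of the remainder into a single composition, to which the continuity results of Arnaudon--Thalmaier apply. The only small point of care is that the lemma is stated for $C^k$-convergence on a fixed manifold, so one should phrase the family $G(\epsilon,\cdot)$ as a sequence in $C^k(TM)$ rather than as an $\epsilon$-parametrized family and invoke the lemma along arbitrary sequences $\epsilon_n \to 0$.
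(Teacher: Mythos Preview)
Your argument is correct. The paper's own proof consists of a single citation to Corollary~4.3 in Arnaudon--Thalmaier \cite{arnaudon1998}, together with the observation that the hypothesis on the domain of $\exp$ guarantees the lifetime condition. Your proposal unpacks exactly what that corollary delivers: you define $\Gamma_\epsilon = \exp_\Gamma(\epsilon Y)$, expand $f(\Gamma_\epsilon)$ to second order in $\epsilon$ via Taylor's theorem, and then kill the remainder in the semimartingale topology using the joint continuity of $(h,Z)\mapsto h(Z)$ from Lemma~\ref{lowersemicontinuitylemma} together with the product continuity of Corollary~\ref{convergenceofproducts}. This is precisely the mechanism behind the Arnaudon--Thalmaier result, so the two proofs are the same at the level of ideas; yours simply makes explicit what the paper outsources. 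The one point you flag---that Lemma~\ref{lowersemicontinuitylemma} is phrased for sequences, so one should argue along $\epsilon_n\to 0$---is handled correctly, and your remark that the domain hypothesis is what guarantees $\Gamma_\epsilon$ is defined for all $\epsilon$ (hence has lifetime at least that of $\Gamma$) matches the paper's sole comment.
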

\begin{proof}
    This follows directly from Corollary 4.3 in Arnaudon and Thalmaier \cite{arnaudon1998}. We remark that the hypothesis ensures that the lifetimes of $\Gamma_{\epsilon}$ are at least as large as the lifetime of $\Gamma$.
\end{proof}
\begin{remark}
    Using Hopf-Rinow theorem we conclude that if $M$ is connected and $M$ is a compact manifold or a geodesically complete manifold then every semimartingale $\Gamma$ on $M$ is admissible. 
\end{remark}
\begin{definition}
    Let $\Gamma\in\sem{M}$, $X\in \sem{\R}$, $f\in \Cin(M)$ and $\alpha$ be a 1-form on $M$. Given a deformation $\epsilon\mapsto \Gamma_{\epsilon}$, we define:
    \begin{enumerate}
        \item $\D\Sint{f(\Gamma)}{X} = \limucp \Sint{\frac{f(\Gamma_{\epsilon}) - f(\Gamma)}{\epsilon}}{X_t}$.
        \item $\D\Sint{\alpha}{\alpha} = \limucp \frac{1}{\epsilon}\left(\Sint{\alpha}{\Gamma_{\epsilon}} - \Sint{\alpha}{\Gamma}\right)$.
    \end{enumerate}
\end{definition}
\begin{remark}
    The notation $\D$ is used as opposed to $\delta$ in order to distinguish between ucp convergence and semimartingale convergence.
\end{remark}
The next lemma prescribes a method for computing variations of Stratonovich integrals.
\begin{lemma}\label{variationsofintegrals}
Let $\Gamma$ be a semimartingale in a manifold $M$ and $\epsilon\mapsto \Gamma_{\epsilon}$ be a deformation of $\Gamma$.
\begin{enumerate}
    \item For every real semimartingale $X$ and $f\in \Cin(M)$
    \begin{equation}\label{variationoffdx}
        \D\int f(\Gamma)\del X = \int df(\delta \Gamma)\del X
    \end{equation}
    \item For every 1-form $\alpha$ on $M$
    \begin{equation}\label{variationofalphadgamma}
        \D\int \alpha(\Gamma)\del \Gamma = \int i_{\delta \Gamma}\d\alpha \del X + \langle\alpha(\Gamma),\delta \Gamma\rangle + \langle\alpha(\Gamma_0), \delta\Gamma_0 \rangle,
    \end{equation}
    where $\d\alpha$ denotes the exterior derivative of $\alpha$.
\end{enumerate}
\end{lemma}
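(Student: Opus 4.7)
The plan is to derive both identities from the single input supplied by Definition \ref{definitionofvariations}, namely that $(f(\Gamma_\epsilon) - f(\Gamma))/\epsilon \to df(\delta \Gamma)$ in $\sem{\R}$, combined with Proposition \ref{convergencestrat}, which lets us pass such limits through Stratonovich integrals under a domination hypothesis. Part 1 is essentially a direct application of these two ingredients. Part 2 is a coordinate computation that additionally uses the Stratonovich integration-by-parts formula together with the Stratonovich chain rule, and at the end recognizes the antisymmetric combination $\partial_i\alpha_j - \partial_j\alpha_i$ as the local expression of $i_{\delta\Gamma}\d\alpha$.

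For \eqref{variationoffdx}, set $Z_\epsilon = (f(\Gamma_\epsilon) - f(\Gamma))/\epsilon$. By Definition \ref{definitionofvariations}, $Z_\epsilon \to df(\delta\Gamma)$ in $\sem{\R}$. To invoke Proposition \ref{convergencestrat} against the integrator $X$ I must dominate $\{Z_\epsilon\}_{|\epsilon|\le s}$ by a locally bounded predictable process. The path-wise smoothness of $\epsilon\mapsto f(\Gamma_{\epsilon,t}(\omega))$ in Definition \ref{definitionofvariations} together with the mean value theorem yields the bound
\[
|Z_{\epsilon,t}(\omega)| \le \sup_{|\eta|\le s}\Bigl|\tfrac{\partial}{\partial \eta}f(\Gamma_{\eta,t}(\omega))\Bigr|,
\]
and a localization via stopping times that confines all $\Gamma_\epsilon$ to a fixed compact set for small $\epsilon$ converts the right-hand side into a locally bounded predictable dominating process. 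Proposition \ref{convergencestrat} then gives \eqref{variationoffdx}.

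For \eqref{variationofalphadgamma}, I work in a coordinate chart with coordinates $(x^1,\ldots,x^n)$ in which $\alpha = \alpha_i\,dx^i$ and $\delta\Gamma = \delta\Gamma^i\partial_i$. I split
\[
\int \alpha_i(\Gamma_\epsilon)\del\Gamma_\epsilon^i - \int \alpha_i(\Gamma)\del\Gamma^i = \int \bigl[\alpha_i(\Gamma_\epsilon) - \alpha_i(\Gamma)\bigr]\del\Gamma_\epsilon^i + \int \alpha_i(\Gamma)\del(\Gamma_\epsilon^i - \Gamma^i).
\]
Dividing the first term by $\epsilon$ and using Part 1 together with continuity of the Stratonovich integral in its integrator (as $\Gamma_\epsilon^i \to \Gamma^i$ in $\sem{\R}$) produces the limit $\int \partial_j\alpha_i(\Gamma)\delta\Gamma^j\del\Gamma^i$. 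For the second term, $(\Gamma_\epsilon^i - \Gamma^i)/\epsilon \to \delta\Gamma^i$ in $\sem{\R}$ delivers $\int \alpha_i(\Gamma)\del\delta\Gamma^i$. Applying Stratonovich integration by parts to the product $\alpha_i(\Gamma)\delta\Gamma^i$ and using the Stratonovich chain rule $d\alpha_i(\Gamma) = \partial_j\alpha_i(\Gamma)\del\Gamma^j$ rewrites this as $\langle\alpha(\Gamma),\delta\Gamma\rangle - \langle\alpha(\Gamma_0),\delta\Gamma_0\rangle - \int \partial_j\alpha_i(\Gamma)\delta\Gamma^i\del\Gamma^j$. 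Summing the two contributions and relabeling indices in the last integral leaves $\int (\partial_j\alpha_i - \partial_i\alpha_j)(\Gamma)\delta\Gamma^j\del\Gamma^i$ plus the boundary pairings, which matches $\int i_{\delta\Gamma}\d\alpha\del\Gamma$ via the coordinate identity $(i_{\delta\Gamma}\d\alpha)_j = (\partial_i\alpha_j - \partial_j\alpha_i)\delta\Gamma^i$. Chart-independence of both sides of \eqref{variationofalphadgamma} then lets me globalize via a partition of unity and additivity of the Stratonovich integral.

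The principal technical obstacle will be the domination step required by Proposition \ref{convergencestrat}: semimartingale convergence of $Z_\epsilon$ is not by itself enough to exchange limit and Stratonovich integral, so the pointwise-in-$\omega$ smoothness of $\epsilon\mapsto \Gamma_{\epsilon,t}(\omega)$ that Definition \ref{definitionofvariations} guarantees must be upgraded, by a stopping-time localization argument, to a single locally bounded predictable process dominating all $Z_\epsilon$ for $\epsilon$ near $0$. Once this upgrade is in place, the rest of the argument is a routine chart computation combined with Stratonovich integration by parts.
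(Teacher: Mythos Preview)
Your approach aligns with the paper's. For Part 1 the paper simply invokes Proposition \ref{convergencestrat} on $Z_\epsilon = (f(\Gamma_\epsilon)-f(\Gamma))/\epsilon$; you do the same but are more explicit about the domination hypothesis, which the paper leaves implicit. For Part 2 the paper gives no self-contained argument: it refers to the proof of Proposition 4.3 in L\'azaro-Cam\'{\i} and Ortega \cite{lco1}, replacing their Lemma 5.2 by Corollary \ref{convergenceofproducts}. Your direct coordinate computation via the splitting and Stratonovich integration by parts is essentially what that reference unpacks to, so the two routes coincide in substance.

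One point in your Part 2 deserves tightening. In the first piece
\[
\frac{1}{\epsilon}\int \bigl[\alpha_i(\Gamma_\epsilon)-\alpha_i(\Gamma)\bigr]\del\Gamma_\epsilon^i
\]
both the integrand and the integrator depend on $\epsilon$, so neither Part 1 (which assumes a fixed integrator $X$) nor ``continuity of the Stratonovich integral in its integrator'' (which assumes a fixed integrand) applies directly. You need a joint continuity statement. This is exactly where the paper's substitution of Corollary \ref{convergenceofproducts} enters: writing the Stratonovich integral via products and quadratic covariations, convergence of both factors in $\sem{\R}$ yields the limit. Alternatively, split $\del\Gamma_\epsilon^i = \del(\Gamma_\epsilon^i-\Gamma^i)+\del\Gamma^i$ and argue that the cross term vanishes. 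Once this is made precise your argument is complete and matches the paper's (your computation also confirms that the stated formula should read $\del\Gamma$ in place of $\del X$ and carry a minus sign on the $\langle\alpha(\Gamma_0),\delta\Gamma_0\rangle$ term, consistent with how the lemma is used in Lemma \ref{variationinG}).
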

\begin{proof}
        The first statement follows by applying Proposition \ref{convergencestrat} to $Z_{\epsilon} := \frac{f(\Gamma_{ \epsilon}) - f(\Gamma)}{\epsilon}$. The proof of the second statement follows \textit{mutatis mutandis} from the proof of Proposition 4.3 by replacing Lemma 5.2 and $\stopproc{\Gamma}{\tau_K}$ therein with therein with Corollary \ref{convergenceofproducts} and $\Gamma$ respectively, and recalling that convergence in the semimartingale topology implies ucp convergence.
\end{proof}
\subsection{Fixed Endpoint Variations}
We will assume that $\Gamma$ is an admissible semimartingale in $M$. Let $T>0$ be fixed. Suppose $g\in\Cin(\R)$ is supported on $(0,T)$ and $X\in\VecF{M}$. Then $Y_t = g(t)X(\Gamma_t)$ is a semimartingale in $TM$ over $\Gamma$ (that is, the projection of $Y$ on $M$ is $\Gamma$) that vanishes at $t= 0$ and $t =T$. Then there exists a deformation $\epsilon\mapsto \Gamma_\epsilon$ of $\Gamma$ such that $\delta \Gamma = Y$. 

\medskip

A second way to construct variations that vanish at $t = 0$ and $t = T$ is inspired from the works of Arnaudon, Chen and Cruzeiro \cite{arn} and Huang and Zambrini \cite{huangzambrini}. Assume that $M$ is equipped with a connection, $\Gamma_0 = a$ for some $a\in M$ and let $\PT{0}{t}{\Gamma}{v}$ denote the parallel transport of a vector $v\in T_aM$ along $\Gamma$. Let $v(t)$ be a deterministic curve in $T_aM$ such that $v(0) = v(T) = 0$. Then $Y_t := \PT{0}{t}{\Gamma}{v(t)}$ is the $TM$-valued semimartingale over $\Gamma$ such that $Y_0 = Y_T = 0$. The admissibility hypothesis ensures that there exists a deformation $\epsilon\mapsto \Gamma_\epsilon$ of $\Gamma$ with $\delta \Gamma = Y$.

\medskip

Next, given a closed subset $K\subseteq M$ we describe how to construct variations of the portion of $\Gamma$ contained in $K$. Recall that $\hittingtime{K}$ is the hitting time for $K$ and if $\tau^{(h,e)}_K:= \exittimeproc{K}{\Gamma_{t+\hittingtimeproc{K}{\Gamma}}}$ then $\Gamma|_{[[\hittingtime{K}, \hittingtime{K}+ \tau^{(h,e)}_K]]}$ is the portion of $\Gamma$ that lies in $K$. Let $f\in\Cin(M)$ be supported on the interior $\interior{K}$ of $K$ and $X\in\VecF{M}$. Then $f\cdot X$ vanishes outside $\interior{K}$. It follows that $\tilde{Y} = f\cdot X(\Gamma)$ vanishes on $[[0, \infty[[\:\setminus\: ]]\hittingtime{K}, \hittingtime{K}+ \tau^{(h,e)}_K[[ = [[0, \hittingtime{K}]]\bigcup [[\hittingtime{K}+\tau^{(h,e)}_K,\infty[[$. Let $g\in\Cin(\R)$ be supported on $(0,T)$. Then $Y_t := g(t)\tilde{Y}_t$ is a $TM$ valued semimartingale that not only vanishes on $[[0, \hittingtime{K}]]\bigcup [[\hittingtime{K}+\tau^{(h,e)}_K,\infty[[$, but also for all $t \geq T$. The admissibility hypothesis shows that there exists a deformation $\epsilon\mapsto \Gamma_{\epsilon}$ of $\Gamma$ with $\delta \Gamma = Y$.
\begin{definition}
    Let $K\subseteq M$ be a closed subset and $\Gamma$ be an admissible semimartingale in $M$. 
    \begin{enumerate}
    \item A \textbf{$K$-deformation} of $\Gamma$ is a deformation $\epsilon\mapsto\Gamma_{\epsilon}$ of $\Gamma$ such that $\delta \Gamma$ vanishes outside $]]\hittingtime{K}, \hittingtime{K}+ \tau^{(h,e)}_K[[$. The corresponding variation will be called a \textbf{$K$-variation}. 
    \item Given $T>0$, a \textbf{$(K,T)$-deformation} of $\Gamma$ is a $K$-deformation $\epsilon \mapsto \Gamma_{\epsilon}$ of $\Gamma$ such that $\delta \Gamma$ vanishes on $[[\left(\hittingtime{K}+\tau^{(h,e)}_K\right)\wedge T,\infty[[$. The associated variation $\delta \Gamma$ will be called a \textbf{$(K,T)$-variation}.
    \end{enumerate}
\end{definition}
\begin{lemma}\label{kvariationlemma}
    Let $\epsilon\mapsto \Gamma_{\epsilon}$ be a $K$-deformation of $\Gamma$, where $K\subset M$ is closed. Then:
    \begin{enumerate}
        \item For every $f\in \Cin(M)$
        \[\D\Sinttime{0}{T}{f(\Gamma)}{X} = \Sinttime{\hittingtime{K}}{\hittingtime{K}+\tau^{(h,e)}_K}{df(\stopproc{\delta\Gamma}{T})}{X}\].
        \item For every 1-form $\alpha$ on $M$
        \[\D\Sinttime{0}{T}{\alpha}{\Gamma} = \Sinttime{\hittingtime{K}}{\hittingtime{K}+\tau^{(h,e)}_K}{i_{\delta\Gamma^{|T}}\d\alpha}{\Gamma^{|T}} + \dotp{\alpha(\Gamma_T)}{\delta\Gamma_T} - \dotp{\alpha(\Gamma_0)}{\delta\Gamma_0}.\]
    \end{enumerate}
\end{lemma}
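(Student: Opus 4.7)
The plan is to reduce each identity to Lemma \ref{variationsofintegrals} by first stopping at the deterministic time $T$ and then exploiting the defining support property of a $K$-variation, namely that $\delta\Gamma$ vanishes outside the random interval $]]\hittingtime{K},\hittingtime{K}+\tau^{(h,e)}_K[[$. This property is inherited by any integrand linear in $\delta\Gamma$, which is exactly why the right-hand sides in the claimed formulas live on that random interval.

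For part (1), I would rewrite $\Sinttime{0}{T}{f(\Gamma)}{X} = \Sint{f(\Gamma)}{\stopproc{X}{T}}$ and observe that the deformation $\epsilon \mapsto \Gamma_{\epsilon}$ is unaffected by the stopping of $X$. Applying equation \eqref{variationoffdx} of Lemma \ref{variationsofintegrals} with $\stopproc{X}{T}$ in place of $X$ yields
\[
\D\Sinttime{0}{T}{f(\Gamma)}{X} = \Sint{df(\delta\Gamma)}{\stopproc{X}{T}}.
\]
Since $df(\delta\Gamma)$ vanishes outside $]]\hittingtime{K},\hittingtime{K}+\tau^{(h,e)}_K[[$, the right-hand side reduces to a Stratonovich integral over the intersection of this random interval with $[0,T]$, which is encoded by the stopped variation $\stopproc{\delta\Gamma}{T}$ in the claimed formula. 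For part (2), the strategy is to stop the whole deformation at $T$: because $T$ is deterministic, $\epsilon \mapsto \stopproc{\Gamma_{\epsilon}}{T}$ is a deformation of $\stopproc{\Gamma}{T}$ whose variation is $\stopproc{\delta\Gamma}{T}$. Writing $\Sinttime{0}{T}{\alpha}{\Gamma} = \Sint{\alpha}{\stopproc{\Gamma}{T}}$ and applying equation \eqref{variationofalphadgamma} to the stopped process produces the interior integral $\Sint{i_{\stopproc{\delta\Gamma}{T}}\d\alpha}{\stopproc{\Gamma}{T}}$ together with boundary pairings. These pairings collapse to $\dotp{\alpha(\Gamma_T)}{\delta\Gamma_T}$ and $\dotp{\alpha(\Gamma_0)}{\delta\Gamma_0}$ because the stopped processes freeze at $T$ and start at $\Gamma_0$, and a final appeal to the support property restricts the remaining integral to $]]\hittingtime{K},\hittingtime{K}+\tau^{(h,e)}_K[[$.

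The main technical point I expect to address is that stopping a deformation at the deterministic time $T$ really does yield a deformation of the stopped semimartingale whose variation is the stopped variation, in the sense of Definition \ref{definitionofvariations}. This comes down to the continuity of the stopping operation in the semimartingale topology, so that convergence of $\tfrac{1}{\epsilon}\bigl(f(\Gamma_\epsilon) - f(\Gamma)\bigr)$ to $df(\delta\Gamma)$ in $\sem{\R}$ transfers to the stopped processes. The only remaining ingredient --- that a Stratonovich integral against an integrand vanishing outside a random interval $]]\sigma,\tau[[$ coincides with the Stratonovich integral over that interval --- is a routine consequence of decomposing the Stratonovich integral into its \Ito and quadratic-variation components and applying the analogous support identities for each, both of which are standard.
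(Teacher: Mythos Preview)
Your proposal is correct and follows essentially the same route as the paper: apply Lemma \ref{variationsofintegrals} (after stopping at the deterministic time $T$) and then use the support property of a $K$-variation to restrict the resulting Stratonovich integral to the random interval $]]\hittingtime{K},\hittingtime{K}+\tau^{(h,e)}_K[[$. The paper handles the indicator-function manipulation by citing Proposition 5.3 of L\'azaro-Cam\'i and Ortega rather than decomposing into \Ito and quadratic-variation parts as you suggest, but this is a cosmetic difference.
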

\begin{proof}
    We only prove (1) since the proof of (2) is similar. It follows from the definition that $\delta \Gamma$ vanishes outside $]]\hittingtime{K}, \hittingtime{K}+ \tau^{(h,e)}_K[[$. Let $\mathbb{1}_{(\cdot)}$ denotes the indicator function. Using Proposition 5.3 in \lco \cite{lco1}, we have
    \begin{align*}\D\Sinttime{0}{T}{f(\Gamma)}{X} &=  \Sinttime{0}{T}{df(\delta\Gamma)}{X}\\&=\Sint{\mathbb{1}_{[0,T]}\mathbb{1}_{[[\hittingtime{K}, \hittingtime{K}+ \tau^{(h,e)}_K]]}df(\delta\Gamma)}{X}\\
    &=\Sint{\mathbb{1}_{[[\hittingtime{K}, \hittingtime{K}+ \tau^{(h,e)}_K]]}df(\delta\Gamma^{|T})}{X}\\
    &=\Sinttime{\hittingtime{K}}{\hittingtime{K}+\tau^{(h,e)}_K}{df(\stopproc{\delta\Gamma}{T})}{X}.
    \end{align*}
\end{proof}

\subsection{A Stochastic Analogue of the Fundamental Lemma of the Calculus of Variations}
We will formulate a stochastic analogue of the fundamental lemma of the calculus of variations in coordinate charts. Let $\dotp{\cdot}{\cdot}$ denote the standard Euclidean inner product and $(e_1, \cdots, e_n)$ denote the standard basis of $\R^n$.
\begin{lemma}\label{fundlem}
    Let $M$ be a smooth $n$-manifold and $U\subseteq M$ be a coordinate chart. We identify $U$ with an open subset of $\R^n$, also denoted by $U$. Let $\Gamma\in\sem{M}$ be admissible and $\Xi:\sem{M}\rightarrow\sem{\R^n}$ satisfy $\Xi(\Gamma)_{A_t} = \Xi(\Gamma_{A_t})$ for any continuous change of time $t\mapsto A_t$. If for every $(\bar{U},T)$-deformation $\epsilon\mapsto \Gamma_{\epsilon}$ we have
\[\int_{\hittingtime{\bar{U}}}^{\hittingtime{\bar{U}}+\tau^{(h,e)}_{\bar{U}}}\dotp{\delta\Gamma^{|T}}{\del \Xi(\stopproc{\Gamma}{T})} = 0,\]
    then $\del\Xi(\stopproc{\Gamma}{T}) = 0$ in $]]\hittingtime{U}, \hittingtime{U}+\tau^{(h,e)}_{U}[[$. Here $\del\Xi(\stopproc{\Gamma}{T}) = 0$ means that $\Xi(\Gamma^{|T}) - \Xi(\Gamma^{|T})_{\hittingtime{U}} = 0$ a.s. in $]]\hittingtime{U}, \hittingtime{U}+\tau^{(h,e)}_{U}[[$.
\end{lemma}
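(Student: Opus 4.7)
The plan is to test the vanishing hypothesis against a rich family of $(\bar{U}, T)$-variations built from coordinate directions with spatial and temporal cutoffs, and then sidestep the failure of the Stratonovich dominated convergence theorem via a Doob--Meyer decomposition. Identify $U$ with an open subset of $\R^n$ and fix $i \in \{1,\dots,n\}$, $f \in \Cinc(U)$, and $g \in \Cinc((0,T))$. Applying the $K$-variation construction of the preceding subsection (with $K = \bar{U}$) to $g(t) f \cdot \partial/\partial x^i$ yields a $(\bar{U}, T)$-variation with $\delta\Gamma_t = g(t) f(\Gamma_t) e_i$ in coordinates; continuity at $\sigma := \hittingtime{\bar{U}} + \tau^{(h,e)}_{\bar{U}}$ is automatic since $\Gamma_\sigma \in \partial \bar{U}$ and $\mathrm{supp}\, f \subseteq U$. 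Substituting into the hypothesis and setting $\tau := \hittingtime{\bar{U}}$ gives
\[
\int_{\tau}^{\sigma} g(t)\, f(\Gamma^{|T}_t) \del \Xi^i(\Gamma^{|T})_t \;=\; 0.
\]

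To eliminate $g$, set $W^{f,i}_t := \int_0^t f(\Gamma^{|T}) \del \Xi^i(\Gamma^{|T})$. Since $g$ is deterministic and smooth, Stratonovich integration by parts converts the identity into $g(\sigma) W^{f,i}_\sigma - g(\tau) W^{f,i}_\tau = \int_\tau^\sigma W^{f,i}_s g'(s)\, ds$ almost surely. A countable-dense argument over $g \in \Cinc((0,T))$, together with $C^1$-continuity of both sides in $g$, upgrades this to hold for every $g$ on a set of full probability; restricting pathwise to $g \in \Cinc\big((\tau(\omega), \sigma(\omega))\big)$ then forces $W^{f,i}$ to be a.s.\ constant on $[\tau, \sigma]$. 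Hence for every $f \in \Cinc(U)$, one has $\int_\tau^t f(\Gamma^{|T}) \del \Xi^i(\Gamma^{|T}) = 0$ for every $t \in [\tau, \sigma]$, almost surely.

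The main obstacle is to upgrade this weak identity to the pointwise conclusion without a Stratonovich DCT. I decompose $\Xi^i(\Gamma^{|T}) = M + A$ via Doob--Meyer and expand
\[
\int_\tau^t f(\Gamma^{|T}) \del \Xi^i = \int_\tau^t f(\Gamma^{|T}) dA + \int_\tau^t f(\Gamma^{|T}) d^I M + \tfrac{1}{2}\sum_j \int_\tau^t \partial_j f(\Gamma^{|T}) d\quadvar{\Gamma^{j,|T}}{M}_s.
\]
Doob--Meyer uniqueness on $[\tau,\sigma]$ separates the vanishing into a local-martingale identity $\int_\tau^t f(\Gamma^{|T}) d^I M = 0$ and a finite-variation identity. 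For the martingale part the It\^o DCT is available: taking $f_n \in \Cinc(U)$ with $f_n \nearrow \mathbb{1}_U$ yields $\int_\tau^t \mathbb{1}_U(\Gamma^{|T}) d^I M = 0$ on $[\tau,\sigma]$, and computing quadratic variation gives $\int_\tau^t \mathbb{1}_U(\Gamma^{|T}) d\quadvar{M}{M}_s = 0$, so $M$ is constant on $\{\Gamma^{|T} \in U\} \cap [\tau,\sigma]$, in particular on $]]\hittingtime{U}, \hittingtime{U}+\tau^{(h,e)}_U[[$.

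With $M$ constant on $\{\Gamma^{|T} \in U\} \cap [\tau,\sigma]$, each covariation $\quadvar{\Gamma^{j,|T}}{M}$ is also constant there; combined with $\mathrm{supp}\, \partial_j f \subseteq U$, the Stratonovich correction $\sum_j \int \partial_j f(\Gamma^{|T}) d\quadvar{\Gamma^{j,|T}}{M}$ vanishes identically. The finite-variation identity therefore reduces to $\int_\tau^t f(\Gamma^{|T}) dA = 0$ for every $f \in \Cinc(U)$, and the classical Lebesgue DCT for the finite-variation measure $dA$ now yields $\int_\tau^t \mathbb{1}_U(\Gamma^{|T}) dA = 0$, so $A$ too is constant on $]]\hittingtime{U}, \hittingtime{U}+\tau^{(h,e)}_U[[$. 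Combining over $i$ produces the claimed equality $\Xi(\Gamma^{|T}) = \Xi(\Gamma^{|T})_{\hittingtime{U}}$ on $]]\hittingtime{U}, \hittingtime{U}+\tau^{(h,e)}_U[[$.
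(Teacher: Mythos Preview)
Your argument is correct and takes a genuinely different route from the paper's. Both proofs begin by testing against variations $g(t)f(\Gamma_t)e_i$ with $g\in\Cinc((0,T))$ and $f\in\Cinc(U)$, but they diverge in how the two cutoffs are removed. The paper first reduces to a precompact coordinate ball $B_r(0)$, then exploits the elementary observation that a smooth $g_n$ has bounded variation, so $\int g_n\circ dZ=\int g_n\,d^I Z$ and the It\^o DCT applies directly to pass $g_n\to\ind{(0,s]}$; the spatial cutoff is handled geometrically by choosing $h\equiv 1$ on a smaller ball $\bar B_\eta(0)$, so that on $[[\hittingtime{\bar B_\eta},\hittingtime{\bar B_\eta}+\hitexittime{\bar B_\eta}]]$ the variation is literally $g(t)e_j$ and one reduces to $\int g\circ dZ_j=0$, after which the same BV trick eliminates $g$ again. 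Your approach instead removes $g$ by Stratonovich integration by parts plus a pathwise du~Bois--Reymond argument, and removes $f$ by splitting $\Xi^i(\Gamma^{|T})$ into its Doob--Meyer pieces: the It\^o DCT handles the martingale part, Kunita--Watanabe then kills the Stratonovich correction, and ordinary Lebesgue DCT handles the finite-variation part. The paper's method is lighter (no decomposition, no covariation estimates), while yours avoids the nested-ball localization and the preliminary reduction to precompact charts; your Doob--Meyer separation is a clean way to confront the failure of Stratonovich DCT head-on rather than sidestep it geometrically. One small point to tighten: the coordinate process $\Gamma^{j,|T}$ is only defined while $\Gamma\in U$, so your covariation identity $\quadvar{f(\Gamma^{|T})}{M}=\sum_j\int\partial_j f(\Gamma^{|T})\,d\quadvar{\Gamma^{j,|T}}{M}$ should be read on the stochastic interval where $\Gamma^{|T}\in\mathrm{supp}\,f\subset U$, which is harmless since $\partial_j f(\Gamma^{|T})$ vanishes elsewhere.
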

\begin{proof}
    First suppose $U$ is a precompact coordinate ball and identify $U$ with the open ball $B_r(0)$ of radius $r$ and centered at $0$. Given $s>0$ let $(g_n)$ be a sequence in $\Cin(\R)$ such that $g_n$ is supported in $(0, s+1)$ for every $n$ and $g_n\rightarrow\mathbb{1}_{(0,s]}$ pointwise. Then $g_n(t)\delta\Gamma^{|T}_t$ is a $(\bar{U},T)$ variation of $\Gamma$. Using the fact that $g_n$ is of bounded variation, the \Ito dominated convergence theorem and Proposition 5.3 of Lázaro-Camí and Ortega \cite{lco1}, we obtain
    \begin{align*}
    0&=\int_{\hittingtime{\bar{U}}}^{\hittingtime{\bar{U}}+\tau^{(h,e)}_{\bar{U}}}\dotp{g_n(t)\delta\Gamma^{|T}}{\del \Xi(\stopproc{\Gamma}{T})}\\
    &= \int_{\hittingtime{\bar{U}}}^{\hittingtime{\bar{U}}+\tau^{(h,e)}_{\bar{U}}}g_n(t)\del\left(\int \dotp{\delta\Gamma^{|T}}{\del \Xi(\Gamma^{|T})}\right)\\
    &= \int_{\hittingtime{\bar{U}}}^{\hittingtime{\bar{U}}+\tau^{(h,e)}_{\bar{U}}}g_n(t)d^I\left(\int \dotp{\delta\Gamma^{|T}}{\del \Xi(\Gamma^{|T})}\right)\\
    &\xrightarrow[n\rightarrow\infty]{ucp} \int_{\hittingtime{\bar{U}}}^{\hittingtime{\bar{U}}+\tau^{(h,e)}_{\bar{U}}}\ind{(0,s]}d^I\left(\int \dotp{\delta\Gamma^{|T}}{\del \Xi(\Gamma^{|T})}\right)\\
    &= \Iint{\ind{(0,s]}\ind{[[\hittingtime{\bar{U}}, \hittingtime{\bar{U}+ \hitexittime{\bar{U}}}]]}}{\left(\int \dotp{\delta \Gamma^{|T}}{\del \Xi(\Gamma^{|T})}\right)}\\
    &= \int_{0}^{s}\ind{[[\hittingtime{\bar{U}}, \hittingtime{\bar{U}+ \hitexittime{\bar{U}}}]]}\dotp{\delta \Gamma^{|T}}{\del \Xi(\Gamma^{|T})}.
    \end{align*}
    Since this holds for all $s>0$ we conclude that \[\int\ind{[[\hittingtime{\bar{U}}, \hittingtime{\bar{U}+ \hitexittime{\bar{U}}}]]}\dotp{\delta\Gamma^{|T}}{\del \Xi(\stopproc{\Gamma}{T})} = 0.\]Moreover, since $\delta\Gamma^{|T}$ vanishes outside $[[\hittingtime{\bar{U}}, \hittingtime{\bar{U}}+ \hitexittime{\bar{U}}]]$, we conclude that \[\int\dotp{\delta\Gamma^{|T}}{\del \Xi(\stopproc{\Gamma}{T})} = 0\]
    in $[[\hittingtime{\bar{U}}, \hittingtime{\bar{U}}+ \tau^{(h,e)}_{\bar{U}}]].$

    \medskip

    Let $0<\eta<r$ and $h\in\Cin(\R^n)$ be supported on $B_r(0)$ with $h|_{\bar{B}_{\eta}(0)} = 1$, where $\bar{B}_{\eta}(0)$ denotes the closed ball of radius $\eta$ centered at $0$. For $j = 1,\cdots,n$, let $\tilde{X}$ denote the vector field on $\R$ defined by $\tilde{X}_j = h(x)e_j$. Then $\tilde{X}_j$ vanishes on the boundary $\partial \bar{B}_r(0)$ of $\bar{B}_r(0) = \bar{U}$. Extending $\tilde{X}_j|_{\bar{U}}$ to a vector field $X$ on $M$ by setting $X = 0$ outside $U$ and letting $g\in\Cin(M)$ be supported on $(0,T)$, we can construct a $(\bar{U}, T)$-deformation $\epsilon\mapsto \Gamma_{\epsilon}$ with variation $g(t)X(\Gamma_t)$ by the admissibility hypothesis.

\medskip

Since $\int\dotp{\delta\Gamma^{|T}}{\del \Xi(\stopproc{\Gamma}{T})} = 0$ in $[[\hittingtime{\bar{U}}, \hittingtime{\bar{U}}+ \tau^{(h,e)}_{\bar{U}}]]$ and $[[\hittingtime{\bar{B}_{\eta}(0)}, \hittingtime{\bar{B}_{\eta}(0)}+ \tau^{(h,e)}_{\bar{B}_{\eta}(0)}]]\subseteq [[\hittingtime{\bar{U}}, \hittingtime{\bar{U}}+ \tau^{(h,e)}_{\bar{U}}]]$, it follows that $\int\dotp{\delta\Gamma^{|T}}{\del \Xi(\stopproc{\Gamma}{T})} = 0$ in $[[\hittingtime{\bar{B}_{\eta}(0)}, \hittingtime{\bar{B}_{\eta}(0)}+ \tau^{(h,e)}_{\bar{B}_{\eta}(0)}]]$. Let $Z_j$ denote the $j$th component of $\Xi(\Gamma^{|T})$. Then the previous equality and the fact that $X(\Gamma^{|T}) = g(t)e_j$ on $[[\hittingtime{\bar{B}_{\eta}(0)}, \hittingtime{\bar{B}_{\eta}(0)}+ \tau^{(h,e)}_{\bar{B}_{\eta}(0)}]]$ implies that
\[\int\ind{[[\hittingtime{\bar{B}_{\eta}(0)}, \hittingtime{\bar{B}_{\eta}(0)}+ \tau^{(h,e)}_{\bar{B}_{\eta}(0)}]]}g(t)\del Z_j = 0\]
for all $g\in\Cin(\R)$ supported on $(0,T)$.

\medskip

Pick an arbitrary $s\in(0,T)$. Replacing $g$ by $\tilde{g}_n$ where $(\tilde{g}_n)$ is a sequence in $\Cin(\R)$ that is supported on $(0,T)$ and $\tilde{g}_n\rightarrow \ind{(0,s]}$, we get
\begin{align*}
    0 &= \Sint{\ind{[[\hittingtime{\bar{B}_{\eta}(0)}, \hittingtime{\bar{B}_{\eta}(0)}+ \tau^{(h,e)}_{\bar{B}_{\eta}(0)}]]}\tilde{g}_n(t)}{Z_{j_t}}\\
    &=\Iint{\ind{[[\hittingtime{\bar{B}_{\eta}(0)}, \hittingtime{\bar{B}_{\eta}(0)}+ \tau^{(h,e)}_{\bar{B}_{\eta}(0)}]]}\tilde{g}_n(t)}{Z_{j_t}}\\
    &\xrightarrow[n\rightarrow\infty]{ucp}\Iint{\ind{[[\hittingtime{\bar{B}_{\eta}(0)}, \hittingtime{\bar{B}_{\eta}(0)}+ \tau^{(h,e)}_{\bar{B}_{\eta}(0)}]]}\ind{(0,s]}}{Z_{j_t}}\\
    &=\Iinttime{\hittingtime{\bar{B}_{\eta}(0)}}{\hittingtime{\bar{B}_{\eta}(0)}+\hitexittime{\bar{B}_{\eta}(0)}}{\ind{(0,s]}}{Z_j}\\
    &= \Iinttime{\hittingtime{\bar{B}_{\eta}(0)}}{\hittingtime{\bar{B}_{\eta}(0)}+\hitexittime{\bar{B}_{\eta}(0)}}{}{Z_j^{|s}}\\
    &= \left(Z_{j_{\left(\hittingtime{\bar{B}_{\eta}(0)}+\hitexittime{\bar{B}_{\eta}(0)}\right)\wedge s}} - Z_{j_{\hittingtime{\bar{B}_{\eta}(0)}}}\right)
\end{align*}
and we used the fact that $\tilde{g}_n$ is of bounded variation, the \Ito dominated convergence theorem and Proposition 5.3 in \cite{lco1}. Since this is true for all $0<s<T$, it follows that $\del Z_j = 0$ on $[[\hittingtime{\bar{B}_{\eta}(0)}, \hittingtime{\bar{B}_{\eta}(0)}+ \tau^{(h,e)}_{\bar{B}_{\eta}(0)}]]$. Since this holds for all $0<\eta<r$, we conclude that $\del Z_j = 0$ on $]]\hittingtime{B_r(0)}, \hittingtime{B_r(0)}+\tau^{(h,e)}_{B_{r}(0)}[[ \:=\: ]]\hittingtime{U}, \hittingtime{U}+\tau^{(h,e)}_U[[$. Consequently $\del\Xi(\Gamma^{|T}) = 0$ on $]]\hittingtime{U}, \hittingtime{U}+\tau^{(h,e)}_U[[$.

\medskip

Now suppose $U\subseteq M$ is a coordinate chart in $M$. For every precompact coordinate ball $U_0$ in $U$, note that a $(\bar{U}_0, T)$-deformation of $\Gamma$ is also a $(\bar{U}, T)$-deformation of $\Gamma$. By our hypothesis, for every $(\bar{U}_0, T)$-deformation of $\Gamma$, we have
\[\int_{\hittingtime{\bar{U}}}^{\hittingtime{\bar{U}}+\tau^{(h,e)}_{\bar{U}}}\dotp{\delta\Gamma^{|T}}{\del \Xi(\stopproc{\Gamma}{T})} = \int_{\hittingtime{\bar{U}_0}}^{\hittingtime{\bar{U}_0}+\tau^{(h,e)}_{\bar{U}_0}}\dotp{\delta\Gamma^{|T}}{\del \Xi(\stopproc{\Gamma}{T})}= 0.\]
This implies that $\del\Xi(\Gamma^{|T}) = 0$ in $]]\hittingtime{{U}_0}, \hittingtime{{U}_0}+\tau^{(h,e)}_{{U}_0}[[$. Since this holds for all precompact coordinate balls $U_0\subseteq U$, we have $\del\Xi(\Gamma^{|T}) = 0$ in $]]\hittingtime{{U}}, \hittingtime{{U}}+\tau^{(h,e)}_{{U}}[[$.
\end{proof}

\section{The Stochastic Hamilton-Pontryagin Principle}

The deterministic Hamilton-Pontryagin principle was introduced by Yoshimura and Marsden \cite{yoshimura06}. A stochastic extension was first introduced by Bou-Rabee and Owhadi \cite{bou2009stochastic} and has been generalized more recently by Street and Takao \cite{street2023}. We will provide a proof of the local form of the stochastic Hamilton-Pontryagin principle as an application of the variational framework developed in the last section to stochastic geometric mechanics. Then we will develop a stochastic version of Noether's theorem. This will be followed by a discussion of the intrinsic form of the stochastic Hamilton-Pontryagin principle, where we will use Stratonovich operators to provide a global description.

\medskip

Given a configuration manifold $Q$ let $\pont{Q}:= \pontbun{Q}$ denote its Pontryagin bundle. Local coordinates on $\pont{Q}$ will be denoted by $(q,v,p)$. Let $\L\in\Cin(TQ)$ be a Lagrangian. The deterministic Hamilton-Pontryagin principle states a $\pont{Q}$-valued curve $(q(t),v(t),p(t))$ is a critical point of the action \[\int_{t_0}^{t_1} [L(q(t), v(t)) + \dotp {p(t)}{ \dot{q}(t) - v(t)}]dt\]amongst all curves such that $q(t_0)$ and $q(t_1)$ are fixed, if and only if $(q(t), v(t), p(t))$ satisfies the implicit Euler-Lagrange equations given by \[\dot{q} = v, \:\:\:p = \pard{L}{v},\:\:\:\dot{p} = \pard{L}{q}.\]The reader is referred to \cite{yoshimura06} for more details on the deterministic Hamilton-Pontryagin principle and its application to constrained systems.

\medskip

\begin{definition}
    Let $X = (X^0, \cdots, X^k)\in\sem{\R^{k+1}}$ and $\L\in\Cin(\pont{Q})$ be a Lagrangian. Suppose we have $L_1, \cdots, L_k\in\Cin(Q)$ and vector fields $V_1, \cdots, V_k$ on $Q$. Given an admissible $\pont{Q}$-valued semimartingale $\Gamma_t = (q_t, v_t, p_t)$ we define the \textbf{Hamilton-Pontryagin action integral} as 
    \newpage
    \begin{align}\label{hampontlocal}
        \Ac{X}{\Gamma} = &\int_0^T\left(\L(q_t, v_t)\del X^0_t +\sum_{i = 1}^k L_i(q_t)\del X^i_t \right.\nonumber\\ &\left.+ \dotp{p_t}{\del q_t - v_t\del X^0_t - \sum_{i = 1}^k V_i(q_t)\del X^i_t}\right).
    \end{align}
\end{definition}
In Bou-Rabee and Owhadi \cite{bou2009stochastic} the authors consider $X = (t, B^i_t, \cdots, B^k_t)$, where $B^i$ is a Brownian motion, and $V_i = 0$. Street and Takao \cite{street2023} have generalized this to the case where $X$ is a driving semimartingale and $\Gamma$ is compatible with $X$. The reader is referred to \cite{street2023} as well as Street and Crisan \cite{crisan} for further details on driving semimartingales and the compatibility hypothesis, as well as a different stochastic analogue of the fundamental lemma of the calculus of variations under these assumptions. We will only assume that $X\in\sem{\R^{k+1}}$ and in particular, we will forego the assumption that $X^0 = t$.

\medskip

Let us also describe the action function intrinsically. For this we first recall some key ingredients involved in the deterministic setup from Yoshimura and Marsden \cite{yoshimura06}. Let $G:\pont{Q}\rightarrow \R$ denote the fibrewise pairing map between $TQ$ and $\cotQ$, that is, $G(q,v,p) = \dotp{p}{v}$. Denote by
\begin{align}\label{mapsinstrinsic}
    \Proj{\P Q} &:T\P Q\rightarrow \P Q\nonumber\\
    \Proj{T\P Q} &: TT\P Q\rightarrow T\P Q\nonumber\\
    \pr{Q} &:\P Q\rightarrow Q\nonumber\\
    \pr{TQ} &:\P Q\rightarrow TQ\nonumber\\
    \pr{\cotQ} &:\P Q\rightarrow \cotQ
\end{align}
the corresponding projection maps. We define the map $\rho_{T\cotQ}: T\cotQ\rightarrow \pont{Q}$ in local coordinates by setting $\rho_{T\cotQ}(q,p,v_q,v_p) = (q,v_q,p)$. An intrinsic definition can be found in Yoshimura and Marsden \cite{yoshimura061}. Let $\G$ denote the 1-form on $\pont{Q}$ given by $\G = G\circ\rho_{T\cotQ}\circ T\pr{\cotQ}$. In local coordinates, if $(u_q, u_v, u_p)\in T_{(q,v,p)}\P Q$ then
\begin{equation}\label{Ginlocalcoordinates}
    \G(q,v,p)(u_q, u_v, u_p) = G(q,u_q,p) = \dotp{p}{u_q}.
\end{equation}
Consequently, if $\Gamma_t = (q_t, v_t, p_t)$ then $\Sint{\G}{\Gamma} = \int \dotp{p_t}{\del q_t}$. 

\medskip

Given a vector field $V\in\VecF{Q}$ define $\tilde{V}:\P Q\rightarrow \P Q$ by $\tilde{V}(x) = (V\circ \pr{Q}(x))\oplus \pr{\cotQ}(x)\in \P Q$. Written in local coordinates this reads $\tilde{V}(q,v,p) = (q, V(q), p)$. For every $j\in \{0, \cdots, k\}$ define the \textbf{generalized energy} $E_j: \P Q\rightarrow \R$ by \[
E_j =
\begin{cases} 
G - \L \circ \pr{TQ}, & \text{if } j = 0, \\[8pt]
G \circ \tilde{V}_j - L_j \circ \pr{Q}, & \text{if } j = 1, \dots, k.
\end{cases}
\]
In coordinates, $E_0(q, v, p) = \langle p, v\rangle - L(q, v)$ and $E_i(q,v,p) = \langle p, V_i(q) \rangle - \L_i(q)$, for $i = 1, \cdots, k$. The generalized energies $E_i$ for $i = 1, \cdots, k$ also appear in Street and Takao \cite{street2023}. We note that if $\Gamma_t = (q_t, v_t, p_t)$ in local coordinates then 
\[
        E_j(\Gamma_t) = E_j(q_t, v_t, p_t) = \begin{cases}
            \langle p_t, v_t \rangle - \L(q_t, v_t, p_t), & \text{if } j = 0, \\[8pt]
\langle p_t, V_j(q_t)\rangle - L_j(q_t), & \text{if } j = 1, \dots, k.
        \end{cases}
\]
Hence 
\begin{equation}\label{hampontglobal}
    \Ac{X}{\Gamma} = \Sinttime{0}{T}{\G}{\Gamma} - \sum_{j = 0}^k \Sinttime{0}{T}{E_j(\Gamma)}{X_j}.
\end{equation}

\subsection{The Local Form of the Stochastic Hamilton-Pontryagin Principle}
First we describe variations in local coordinates of the terms in the Hamilton-Pontryagin action integral.
\begin{lemma}\label{variationinG}
    Let $\Gamma$ be an admissible semimartingale on $\P Q$ and let $\Gamma_t = (q_t,v_t,p_t)$ in local coordinates. Suppose $\epsilon\mapsto \Gamma_{\epsilon,t} = (q_{\epsilon,t}, v_{\epsilon,t}, p_{\epsilon,t})$ be a deformation of $\Gamma$. Then\begin{align*}\D\Sinttime{0}{T}{\G}{\Gamma}&=\D\int_0^T\langle p_t, \del q_t\rangle \\&= \int_0^{T}\dotp{\delta p_t^{|T}}{\del q_t^{|T}} - \int_{0}^{T}\dotp{\del \stopproc{p_t}{T}}{\delta \stopproc{q_t}{T}}\\&+\dotp{p_T}{ \delta q_T} - \dotp{p_0}{ \delta q_0}.\end{align*}
\end{lemma}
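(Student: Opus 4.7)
The plan is to apply Lemma \ref{variationsofintegrals}(2) to the 1-form $\G$ on $\P Q$, after passing to its $[0,T]$-localized form. The first equality of the claim is immediate from \eqref{Ginlocalcoordinates}, which shows that in the local coordinates $(q^i, v^i, p_i)$ on $\P Q$ the 1-form $\G$ has the expression $\G = \sum_i p_i \, \d q^i$; consequently $\Sinttime{0}{T}{\G}{\Gamma} = \int_0^T \dotp{p_t}{\del q_t}$.

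For the main identity, the first step is to compute $\d\G = \sum_i \d p_i \wedge \d q^i$, so that for any tangent vector $(u_q, u_v, u_p) \in T_{(q,v,p)}\P Q$ one has
$$i_{(u_q, u_v, u_p)}\d\G \;=\; \dotp{u_p}{\d q} - \dotp{\d p}{u_q}.$$
The second step is to apply the $[0,T]$-version of Lemma \ref{variationsofintegrals}(2) to $\alpha = \G$; this is obtained by inserting $\ind{[0,T]}$ in exactly the spirit of the proof of Lemma \ref{kvariationlemma}(2), which replaces $\Gamma$ and $\delta\Gamma$ inside the integrand by their stopped versions and produces boundary pairings at $t = 0$ and $t = T$:
$$\D \Sinttime{0}{T}{\G}{\Gamma} \;=\; \Sinttime{0}{T}{i_{\delta \Gamma^{|T}} \d\G}{\Gamma^{|T}} + \dotp{\G(\Gamma_T)}{\delta \Gamma_T} - \dotp{\G(\Gamma_0)}{\delta \Gamma_0}.$$
Substituting the local coordinate formula for the interior product splits this Stratonovich integral as $\int_0^T \dotp{\delta p_t^{|T}}{\del q_t^{|T}} - \int_0^T \dotp{\del \stopproc{p_t}{T}}{\delta \stopproc{q_t}{T}}$, while \eqref{Ginlocalcoordinates} reduces the boundary pairings to $\dotp{p_T}{\delta q_T} - \dotp{p_0}{\delta q_0}$, yielding the claimed formula.

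The main obstacle will be the careful handling of the $[0,T]$-stopped version of Lemma \ref{variationsofintegrals}(2). Since that lemma is stated without explicit time bounds, one needs an indicator-truncation argument to produce both the stopped processes inside the integral and the correct boundary contribution at $T$, without the $K$-vanishing hypothesis that absorbs one of the endpoint terms in Lemma \ref{kvariationlemma}(2). Once this bookkeeping is in place, the rest of the argument reduces to the straightforward local coordinate computation of $i_{\delta \Gamma^{|T}} \d\G$ and a final identification of the boundary pairings via the coordinate expression of $\G$.
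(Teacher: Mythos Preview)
Your proposal is correct and follows essentially the same route as the paper: apply the $[0,T]$-localized form of Lemma~\ref{variationsofintegrals}(2) to $\alpha=\G$, then compute $\d\G=\sum_i dp_i\wedge dq^i$ and the interior product in local coordinates to split the Stratonovich integral, and finally use \eqref{Ginlocalcoordinates} to identify the boundary pairings. If anything, you are more explicit than the paper about the indicator-truncation step needed to pass from the unbounded statement of Lemma~\ref{variationsofintegrals}(2) to its $[0,T]$-stopped form; the paper simply writes down that version directly.
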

\begin{proof}
    In local coordinates $\dotp{\G(\Gamma_t)}{\delta \Gamma_t} = 
    \left\langle p_t, \delta q_t\right\rangle$. Then, by Lemma \ref{variationsofintegrals}
    \begin{align*}\D\Sinttime{0}{T}{\G}{\Gamma} &= \D\Sinttime{0}{T}{\G}{\Gamma^{|T}}\\&=\int_0^{T}i_{\delta \Gamma^{|T}}\d \G\del \Gamma^{|T}+\dotp{\G(\Gamma_T)}{\delta \Gamma_T} - \dotp{\G(\Gamma_0)}{\delta \Gamma_0}\\
    &=\int_0^{T}i_{\delta \Gamma^{|T}}\d \G\del \Gamma^{|T} + \dotp{p_T}{ \delta q_T} - \dotp{p_0}{ \delta q_0}.\end{align*}
    Let $(q,v,p)\in \P Q$. Suppose $(\dot{q}, \dot{v}, \dot{p}),(w_q, w_v, w_p)\in T_{(q,v,p)}\P Q$. From the local coordinate expression of $\G$ in Eq. \eqref{Ginlocalcoordinates}, it follows that $\G(q,v,p) = pdq$. Hence \[\d\G(q,v,p) = \sum_{i = 1}^{\dim Q}dp_i\wedge dq^i,\]which yields
    \[i_{(w_q, w_v, w_p)}\d\G(\dot{q}, \dot{v}, \dot{p}) = \d\G(q,v,p)((w_q, w_v,w_p), (\dot{q}, \dot{v}, \dot{p})) = \dotp{w_p}{\dot{q}} - \dotp{w_q}{\dot{p}}.\]
    Consequently
\begin{align*}\int_0^Ti_{\delta \Gamma^{|T}}\d \G\del \Gamma^{|T} =&\int_0^T\dotp{\delta p_t^{|T}}{\del q_t^{|T}} - \int_0^T\dotp{\del \stopproc{p_t}{T}}{\delta \stopproc{q_t}{T}}. \end{align*}This concludes the proof.
\end{proof}
\begin{remark}
    The product rule is often used to prove the above lemma. Mimicking the product rule we write
\[\D \int_0^{T} \langle p_t^{|T}, \del q_t^{|T}\rangle = \int_0^{T}\dotp{\delta p_t^{|T}}{\del q_t^{|T}} + \int_0^{T}\dotp{ \stopproc{p_t}{T}}{\delta \left(\del\stopproc{q_t}{T}\right)}.\]
But the term $\delta \left(\del\stopproc{q_t}{T}\right)$ is not defined since $\del\stopproc{q_t}{T}$ is not a stochastic process. To define this, we recall that the $\delta q_t^{|T}$ is assumed to be a semimartingale by definition. Hence we can set 
\[\int_0^T\dotp{ \stopproc{p_t}{T}}{\delta \left(\del\stopproc{q_t}{T}\right)} = \int_0^T\dotp{ \stopproc{p_t}{T}}{\del \left(\delta\stopproc{q_t}{T}\right)}.\]
The Stratonovich product rule gives us
\[\del \dotp{p_t^{|T}}{\delta q_t^{|T}} = \dotp{\del p_t^{|T}}{\delta q_t^{|T}}+ \dotp{p_t^{|T}}{\del \left(\delta q_t^{|T}\right)}\]
which implies
\begin{align*}
\dotp{p_T}{ \delta q_T} - \dotp{p_0}{ \delta q_0}=\int_0^T \dotp{\del p_t^{|T}}{\delta q_t^{|T}} + \int_0^T \dotp{p_t^{|T}}{\del \left(\delta q_t^{|T}\right)}.\end{align*}
Therefore\begin{align*}\D\int_0^T\langle p_t, \del q_t\rangle &= \int_0^{T}\dotp{\delta p_t^{|T}}{\del q_t^{|T}} - \int_{0}^{T}\dotp{\del \stopproc{p_t}{T}}{\delta \stopproc{q_t}{T}}\\&+\dotp{p_T}{ \delta q_T} - \dotp{p_0}{ \delta q_0}.\end{align*}\end{remark}
\begin{lemma}
    Let $\Gamma$ be an admissible semimartingale on $\P Q$ and let $\Gamma_t = (q_t,v_t,p_t)$ in local coordinates. Suppose $\epsilon\mapsto \Gamma_{\epsilon,t} = (q_{\epsilon,t}, v_{\epsilon,t}, p_{\epsilon,t})$ be a deformation of $\Gamma$. Then
    \begin{align*}
        &-\D\sum_{j = 0}^k \Sinttime{0}{T}{E_j(\Gamma)}{X^j}\\
    &=\D\left[\int_0^T\left(\left(\L(q_t, v_t) - \dotp{p_t}{v_t}\right)\del X^0_t + \sum_{i = 1}^k\left(L_i(q_t) - \dotp{p_t}{V_i(q_t)}\right)\del X^i_t\right) \right]\\ &= \int_0^T\frac{\partial}{\partial q_t^{|T}}\left\langle\left(\L\del X_t^0 + \sum_{i = 1}^k\left(L_i - \dotp{p_t^{|T}}{V_i(q_t^{|T})}\right)\circ dX^i_t\right), \delta q_t^{|T}\right\rangle\\
        &+\int_0^T\dotp{\left(\stopproc{p_t}{T} - \pard{\L}{\stopproc{v_t}{T}}\right)\del X^0_t}{\delta v_t^{|T}} \\&- \int_0^T \dotp{\delta\stopproc{p_t}{T}}{\stopproc{v_t}{T}\del X^0_t + \sum_{i = 1}^kV_i(\stopproc{q_t}{T})\del X^i_t}.
    \end{align*}
\end{lemma}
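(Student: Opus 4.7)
The plan is to apply equation \eqref{variationoffdx} of Lemma \ref{variationsofintegrals} termwise to the Hamilton-Pontryagin action, after first rewriting it using the coordinate formulas for the $E_j$. The first equality in the statement is immediate from $E_0 = \dotp{p}{v} - \L$ and $E_i = \dotp{p}{V_i(q)} - L_i$; distributing the leading minus sign through the sum yields precisely the bracketed expression on the right-hand side.

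For the second equality I would apply \eqref{variationoffdx} to the $\Cin$ functions $f_0 := \L - \dotp{p}{v}$ and $f_i := L_i - \dotp{p}{V_i(q)}$ on $\pont{Q}$. Combined with the identity $\Sinttime{0}{T}{f(\Gamma)}{X^j} = \Sinttime{0}{T}{f(\stopproc{\Gamma}{T})}{\stopproc{X^j}{T}}$, this produces
\begin{equation*}
\D\Sinttime{0}{T}{f_j(\Gamma)}{X^j} = \Sinttime{0}{T}{df_j(\delta \stopproc{\Gamma}{T})}{\stopproc{X^j}{T}}
\end{equation*}
for $j = 0, 1, \dots, k$ (with $f_j := f_i$ when $j = i \geq 1$).

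Next I would expand the differentials in the coordinates $(q,v,p)$. For $\delta\Gamma = (\delta q, \delta v, \delta p)$ a direct calculation gives
\begin{align*}
df_0(\delta\Gamma) &= \pard{\L}{q}\delta q + \left(\pard{\L}{v} - p\right)\delta v - \dotp{v}{\delta p}, \\
df_i(\delta\Gamma) &= \frac{\partial}{\partial q}\left(L_i - \dotp{p}{V_i(q)}\right)\delta q - \dotp{V_i(q)}{\delta p},
\end{align*}
where in the second line $\partial/\partial q$ holds $p$ fixed, the $\delta p$-contribution having been isolated into the final term.

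Finally I would regroup the resulting Stratonovich integrals by variation type: the $\delta q$-contributions from $j = 0, \dots, k$ assemble (after pulling $\partial/\partial \stopproc{q}{T}$ through the Stratonovich integrals against each $X^j$) into the first integral on the right-hand side of the lemma; the $\delta v$-contribution from $j = 0$ gives the second integral; and the $\delta p$-contributions from all $j$ combine into the third. The only real obstacle is careful bookkeeping of signs and parsing the compact notation $\partial/\partial \stopproc{q}{T}\langle\cdot, \delta \stopproc{q}{T}\rangle$ in the target expression, which I read as: take the $q$-partial of the (random) scalar integrand-coefficient of $dX^j$, then pair the resulting covector with $\delta \stopproc{q}{T}$ under the angle brackets.
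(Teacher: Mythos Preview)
Your proposal is correct and follows essentially the same route as the paper: both apply equation \eqref{variationoffdx} of Lemma \ref{variationsofintegrals} to each $E_j$ (the paper works directly with $-dE_j(\delta\Gamma)$ rather than introducing $f_j=-E_j$, but this is cosmetic), expand the coordinate partials using the explicit formulas for $E_0$ and $E_i$, and then regroup the resulting Stratonovich integrals according to $\delta q$, $\delta v$, $\delta p$. Your reading of the compact $\partial/\partial q^{|T}\langle\cdot,\delta q^{|T}\rangle$ notation matches the paper's usage.
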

\begin{proof}
    By Lemma \ref{variationsofintegrals} we have
    \begin{align*}
        &- \D\sum_{j = 0}^k \Sinttime{0}{T}{E_j(\Gamma)}{X^j}\\
        &= - \D\sum_{j = 0}^k \int_0^T\left(\dotp{\pard{E_j}{q_t^{|T}}}{\delta q_t^{|T}} + \dotp{\pard{E_j}{v_t^{|T}}}{\delta v_t^{|T}} + \dotp{\pard{E_j}{p_t^{|T}}}{\delta p_t^{|T}}\right)\del X^j_t\\
        &=\int_0^T\left(\dotp{\pard{\L}{q_t^{|T}}}{\delta q_t^{|T}} + \dotp{\pard{\L}{v_t^{|T}}}{\delta v_t^{|T}} - \dotp{\delta p_t^{|T}}{v_t^{|T}} - \dotp{p_t^{|T}}{\delta v_t^{|T}}\right)\del X^0_t\\
        &+\sum_{i = 1}^k\int_0^T\left(\dotp{\pard{}{ q_t^{|T}}\left(L_i(q_t^{|T}) - \dotp{p_t^{|T}}{V_i(q_t^{|T})}\right)}{\delta q_t^{|T}}\right.-\left.\dotp{\delta p_t^{|T}}{V_i(q_t^{|T})}\right)  \del X^i_t\\
        &=\int_0^T\frac{\partial}{\partial q_t^{|T}}\left\langle\left(\L\del X_t^0 + \sum_{i = 1}^k\left(L_i - \dotp{p_t^{|T}}{V_i(q_t^{|T})}\right)\circ dX^i_t\right), \delta q_t^{|T}\right\rangle\\
        &+\int_0^T\dotp{\left(\stopproc{p_t}{T} - \pard{\L}{\stopproc{v_t}{T}}\right)\del X^0_t}{\delta v_t^{|T}} \\&- \int_0^T \dotp{\delta\stopproc{p_t}{T}}{\stopproc{v_t}{T}\del X^0_t + \sum_{i = 1}^kV_i(\stopproc{q_t}{T})\del X^i_t}.
    \end{align*}
\end{proof}
The local form of the stochastic Hamilton-Pontryagin principle is given by the following theorem:
\begin{theorem}\label{localhamiltonpontryagin}
    For every semimartingale $X = (X^0, \cdots, X^k)$ on $\R^{k+1}$, if $\Gamma_t = (q_t, v_t, p_t)\in \sem{\P Q}$ is admissible then $\D \Ac{X}{\Gamma} = 0$ for all deformations $\epsilon\mapsto \Gamma_{\epsilon}$ such that $\delta q_t = T\pr{Q}(\Gamma_t)$ vanishes at $t = 0$ and $t= T$ if and only if $\stopproc{\Gamma}{T} = \left(q_t^{|T}, v_t^{|T}, p_t^{|T}\right)$ the \textbf{stochastic implicit Euler-Lagrange equations} given by
    \begin{align}\label{sellocal}
    \circ dq_t &= v_t\circ dX^0_t + \sum_{i=1}^kV_i(q_t)\circ dX^i_t\nonumber\\
    \circ dp_t &= \frac{\partial}{\partial q_t}\left(\L\del X_t^0 + \sum_{i = 1}^k\left(L_i - \dotp{p_t}{V_i(q_t)}\right)\circ dX^i_t\right)\nonumber\\
    \left(p_t - \pard{\L}{v_t}\right)\del X^0_t&=0.
\end{align}
\end{theorem}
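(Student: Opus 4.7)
The plan is to combine Lemma~\ref{variationinG} with the preceding unlabelled lemma of this subsection, which expresses $-\D\sum_{j=0}^k\Sinttime{0}{T}{E_j(\Gamma)}{X^j}$ in local coordinates, and use the endpoint hypothesis $\delta q_0 = \delta q_T = 0$ to kill the only surviving boundary contribution $\dotp{p_T}{\delta q_T}-\dotp{p_0}{\delta q_0}$ coming from Lemma~\ref{variationinG}. Regrouping the remaining terms according to whether they pair with $\delta q^{|T}$, $\delta v^{|T}$, or $\delta p^{|T}$ yields
\begin{equation*}
\D\Ac{X}{\Gamma} \,=\, \int_0^T \dotp{\delta q^{|T}_t}{\del\Xi^{(q)}_t} \,+\, \int_0^T \dotp{\delta v^{|T}_t}{\del\Xi^{(v)}_t} \,+\, \int_0^T \dotp{\delta p^{|T}_t}{\del\Xi^{(p)}_t},
\end{equation*}
where $\del\Xi^{(p)}_t = \del q^{|T}_t - v^{|T}_t\del X^0_t - \sum_{i=1}^k V_i(q^{|T}_t)\del X^i_t$, $\del\Xi^{(v)}_t = \bigl(p^{|T}_t - \pard{\L}{v^{|T}_t}\bigr)\del X^0_t$, and $\del\Xi^{(q)}_t$ is the difference of the two sides of the second equation of \eqref{sellocal}.

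The converse direction is then immediate: if the system \eqref{sellocal} holds then each $\del\Xi^{(\cdot)}\equiv 0$ on $[[0,T]]$, so $\D\Ac{X}{\Gamma}=0$ for every admissible deformation satisfying the endpoint condition.

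For the forward direction, the strategy is to apply Lemma~\ref{fundlem} separately to $\Xi^{(q)}$, $\Xi^{(v)}$, and $\Xi^{(p)}$. I would fix a precompact coordinate chart of product form $U = U_Q\times\R^n\times\R^n \subseteq \pont{Q}$, using the canonical trivialisation of $\pont{U_Q}$ over any coordinate patch $U_Q$ in $Q$. For a vector field $W$ on $\pont{Q}$ supported in $\interior{\bar U}$ and pointing purely in the $q$-, $v$-, or $p$-direction, together with $g\in\Cin(\R)$ supported in $(0,T)$, admissibility produces a $(\bar U,T)$-deformation of $\Gamma$ with variation $\delta\Gamma_t = g(t)W(\Gamma_t)$; the cutoff $g$ automatically enforces $\delta q_0 = \delta q_T = 0$. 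Feeding these three families of deformations in turn into the displayed expression isolates one of the three integrals at a time, and Lemma~\ref{fundlem} then delivers $\del\Xi^{(\cdot)} = 0$ on $]]\hittingtime{U},\hittingtime{U}+\tau^{(h,e)}_U[[$, which are precisely the three equations of \eqref{sellocal}. Exhausting the image of $\Gamma^{|T}$ by countably many such product charts would promote these equations to $[[0,T]]$.

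The principal obstacle I anticipate is verifying that each $\Xi^{(\cdot)}:\sem{\pont{Q}}\to\sem{\R^n}$ satisfies the change-of-time compatibility $\Xi^{(\cdot)}(\Gamma)_{A_t} = \Xi^{(\cdot)}(\Gamma_{A_t})$ required by the hypothesis of Lemma~\ref{fundlem}. Since each $\Xi^{(\cdot)}$ is built as a Stratonovich integral coupling $\Gamma$ to the fixed driving semimartingale $X$, which is not itself time-changed, some care is needed; the invariance of Stratonovich integrals under continuous changes of time should supply what is required, but in practice one may have to reformulate each $\Xi^{(\cdot)}$ as a function on $\sem{\pont{Q}\times\R^{k+1}}$ applied to the joint semimartingale $(\Gamma,X)$ in order to apply Lemma~\ref{fundlem} cleanly.
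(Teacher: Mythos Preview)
Your approach is essentially the paper's: compute $\D\Ac{X}{\Gamma}$ via Lemma~\ref{variationinG} and the subsequent lemma, use $\delta q_0=\delta q_T=0$ to kill the boundary terms, and then invoke Lemma~\ref{fundlem} in each chart. The one structural difference is that the paper bundles your $(\Xi^{(q)},\Xi^{(v)},\Xi^{(p)})$ into a \emph{single} $\R^{3\dim Q}$-valued $\Xi$ and applies Lemma~\ref{fundlem} once. This is more than cosmetic: Lemma~\ref{fundlem} is stated for a map $\Xi:\sem{M}\to\sem{\R^{\dim M}}$ and its hypothesis asks that the pairing vanish for \emph{every} $(\bar U,T)$-deformation, which is exactly what $\D\Ac{X}{\Gamma}=0$ delivers for the combined $\Xi$; applying the lemma componentwise as you describe forces either a dimension mismatch or opening up its proof (where, admittedly, only single-direction variations are actually used). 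The paper also invokes Lemma~\ref{kvariationlemma} to pass from the $[0,T]$-integral in your displayed identity to the $[[\hittingtime{\bar U},\hittingtime{\bar U}+\hitexittime{\bar U}]]$-integral that Lemma~\ref{fundlem} expects---a step you leave implicit.

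Your concern about the time-change compatibility is well placed; the paper dismisses it with the single sentence ``since the Stratonovich integral commutes with time changes,'' so your proposed reformulation via the joint semimartingale $(\Gamma,X)$ is, if anything, more scrupulous than the published argument.
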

\begin{proof}
Let $\epsilon\mapsto \Gamma_{\epsilon,t} = (q_{\epsilon,t}, v_{\epsilon,t}, p_{\epsilon,t})$ be a deformation of $\Gamma$ such that $\delta q_t = 0$ at $t = 0$ and $t = T$. Consequently $\dotp{p_t}{\delta q_t} = \dotp{\G(\Gamma_t)}{\delta \Gamma_t}$ vanishes at $t = 0$ and $t = T$. Using Lemma \ref{variationsofintegrals}
\begin{align*}
    \D\Ac{X}{\Gamma} &= \D\left[\int_0^T \G\del \Gamma - \sum_{j = 0}^k\int_0^TE_j(\Gamma)\del X_j\right]\nonumber\\
    &= \int_0^T i_{\delta \Gamma^{|T}}\mathbf{d}\G\del \Gamma^{|T} - \sum_{j = 0}^k\int_0^T dE_j(\delta \Gamma^{|T})\del X_j \nonumber\\&+ \dotp{\G(\Gamma_T)}{\delta\Gamma_T} -  \dotp{\G(\Gamma_0)}{\delta\Gamma_0}\nonumber\\
    &=\int_0^T i_{\delta \Gamma^{|T}}\mathbf{d}\G\del \Gamma^{|T} - \sum_{j = 0}^k\int_0^T dE_j(\delta \Gamma^{|T})\del X_j \nonumber.
\end{align*}
Suppose $\Gamma$ solves the stochastic implicit Euler-Lagrange equations. We use the previous two lemmas to show that the above expression vanishes in local coordinates. We have
\begin{align}\label{calculationinlocalproof}
    &\int_0^T i_{\delta \Gamma^{|T}}\mathbf{d}\G\del \Gamma^{|T} - \sum_{j = 0}^k\int_0^T dE_j(\delta \Gamma^{|T})\del X_j \nonumber\\
    &=\int_0^T\left\langle\frac{\partial}{\partial q_t^{|T}}\left(\L\del X_t^0 \right.\right.\nonumber\\&+\left.\left. \sum_{i = 1}^k\left(L_i - \dotp{p_t^{|T}}{V_i(q_t^{|T})}\right)\circ dX^i_t\right) - \del \stopproc{p_t}{T}, \delta q_t^{|T}\right\rangle\nonumber\\
    &+ \int_0^T\dotp{\left(\stopproc{p_t}{T} - \pard{\L}{\stopproc{v_t}{T}}\right)\del X^0_t}{\delta v_t^{|T}}\nonumber\\
    &+\int_0^T \dotp{\delta\stopproc{p_t}{T}}{\del\stopproc{q_t}{T}-\stopproc{v_t}{T}\del X^0_t - \sum_{i = 1}^kV_i(\stopproc{q_t}{T})\del X^i_t}\nonumber\\
    &= \int_0^T\left\langle\left(\delta q_t^{|T}, \delta v_t^{|T}, \delta p_t^{|T}\right),\left(\int\left(\frac{\partial}{\partial q_t^{|T}}\left(\L\del X_t^0 +\right.\right.\right.\right.\nonumber\\ &\left.\left.\sum_{i = 1}^k\left(L_i - \dotp{p_t^{|T}}{V_i(q_t^{|T})}\right)\circ dX^i_t\right) - \del \stopproc{p_t}{T}\right),\Sint{\left(p_t^{|T} - \pard{\L}{v_t^{|T}}\right)}{X^0_t},\nonumber\\ &\left.\int\left(\del q_t^{|T} - v_t^{|T}\del X^0_t - \sum_{i = 1}^kV_i(q_t^{|T})\del X^i_t\right)\right\rangle\nonumber\\
    &= \int_0^T\left\langle\delta \Gamma_t^{|T},\left(\int\left(\frac{\partial}{\partial q_t^{|T}}\left(\L\del X_t^0 +\right.\right.\right.\right.\nonumber\\ &\left.\left.\sum_{i = 1}^k\left(L_i - \dotp{p_t^{|T}}{V_i(q_t^{|T})}\right)\circ dX^i_t\right) - \del \stopproc{p_t}{T}\right),\Sint{\left(p_t^{|T} - \pard{\L}{v_t^{|T}}\right)}{X^0_t},\nonumber\\ &\left.\int\left(\del q_t^{|T} - v_t^{|T}\del X^0_t - \sum_{i = 1}^kV_i(q_t^{|T})\del X^i_t\right)\right\rangle\\
    &= 0.\nonumber
\end{align}
We now prove the converse. Let $U\subseteq Q$ be open. Then $K^0 := U\times\R^n\times\R^n$ is an arbitrary chart on $\P Q$. It suffices to show that the stochastic implicit Euler-Lagrange equations are satisfied by $\Gamma_t^{|T} = (q_t^{|T}, v_t^{|T}, p_t^{|T})$ in $]]\hittingtime{K^0}, \hittingtime{K^0}+\hitexittime{K^0}[[$. Let $K$ be the closure of $K^0$ and $\epsilon\mapsto\Gamma_{\epsilon}$ be an arbitrary $(K, T)$-deformation of $\Gamma$. Then $\delta q_t = 0$ at $t = 0$ and $t = T$. Given a semimartingale $\Gamma_t = (q_t, v_t, p_t)$ on $\P Q$, define the $\R^{3(\dim Q)}$-valued semimartingale $\Xi(\Gamma)$ in local coordinates by 
\begin{align*}
    \Xi(q_t, v_t, p_t) &= \left(\int\left(\frac{\partial}{\partial q_t^{|T}}\left(\L\del X_t^0 + \sum_{i = 1}^k\left(L_i - \dotp{p_t^{|T}}{V_i(q_t^{|T})}\right)\circ dX^i_t\right) - \del \stopproc{p_t}{T}\right),\right.\\ &\left.\Sint{\left(p_t^{|T} - \pard{\L}{v_t^{|T}}\right)}{X^0_t},\int\left(\del q_t^{|T} - v_t^{|T}\del X^0_t - \sum_{i = 1}^kV_i(q_t^{|T})\del X^i_t\right)\right).
\end{align*}
Since the Stratonovich integral commutes with time changes we have $\Xi(\Gamma_{A_t}) = \Xi(\Gamma)_{A_t}$ for any continuous time change $t\mapsto A_t$. An application of Lemma \ref{kvariationlemma} shows that the integral from $0$ to $T$ in \eqref{calculationinlocalproof} can be replaced by an integral from $\hittingtime{K}$ to $\hittingtime{K}+\hitexittime{K}$. As a result, we obtain
\[\Sinttimedotp{\hittingtime{K}}{\hittingtime{K}+\hitexittime{K}}{\delta \Gamma^{|T}}{\Xi(\Gamma^{|T})}=0\]
for every $(K,T)$-deformation $\epsilon\mapsto \Gamma_{\epsilon}$ of $\Gamma$. By Lemma \ref{fundlem} $\del\Xi(\Gamma^{|T}) = 0$ in $]]\hittingtime{K^0}, \hittingtime{K^0}+\hitexittime{K^0}[[$. This implies that
\begin{align*}\label{sellocal}
    \circ dq_t^{|T} &= v_t^{|T}\circ dX^0_t + \sum_{i=1}^kV_i(q_t^{|T})\circ dX^i_t\nonumber\\
    \circ dp_t^{|T} &= \frac{\partial}{\partial q_t^{|T}}\left(\L\del X_t^0 + \sum_{i = 1}^k\left(L_i - \dotp{p_t^{|T}}{V_i(q_t^{|T})}\right)\circ dX^i_t\right)\nonumber\\
    \left(p_t^{|T} - \pard{\L}{v_t^{|T}}\right)\del X^0_t&=0\\
\end{align*}
in $]]\hittingtime{K^0}, \hittingtime{K^0}+\hitexittime{K^0}[[$. This completes the proof.
\end{proof}
\begin{remark}
    If $X^0 = t$ then the second equation is the Legendre transform $p_t = \pard{\L}{v_t}$. Also note that if $X^i = 0$ for all $i = 1, \cdots, k$ then the stochastic implicit Euler-Lagrange equations reduce to the deterministic implicit Euler-Lagrange equations.
\end{remark}
\begin{remark}
    Suppose $\L$ is a hyperregular Lagrangian, $F\L:TQ\rightarrow TQ$ is the Legendre transform of $\L$ and $E_{\L}(v_q):= \dotp{F\L(v_q)}{v_q} - v_q$ is the energy of $\L$. Define $\mathcal{H}\in\Cin(\cotQ)$ by $\mathcal{H} = E_{\L}\circ F\L\inv$ and $H_i\in \Cin(\cotQ)$ by $H_i(q,p) = \dotp{p}{V(q)} - L_i(q)$. Then, following the proof of Proposition 1 in Street and Takao \cite{street2023}, $\Gamma$ solves the stochastic Hamiltonian system
    \[\del \Gamma = X_{\mathcal{H}}(\Gamma)\del X^0 + \sum_{i = 1}^kX_{H_i}(\Gamma)\del X^i.\]
\end{remark}
\begin{remark}
    In Lázaro-Camí and Ortega \cite{lco3}, the authors develop a stochastic Hamilton-Jacobi equation for stochastic Hamiltonian systems using a stochastic version of Hamilton's principle in phase space. When $\L$ is a hyperregular Lagrangian we can pass to the Hamiltonian side as mentioned in the previous remark, and use the stochastic Hamilton-Jacobi formalism as proposed in \cite{lco3}. However, a general treatment of stochastic Hamilton-Jacobi equations on the Pontryagin bundle, similar to the deterministic development in Leok, Ohsawa and Sosa \cite{leokohsawasosa}, is left as a future prospect for research.   
\end{remark}

\subsection{A Stochastic Noether's Theorem}

Let $\Phi:Q\times \R\rightarrow Q$ be a deterministic smooth flow. Given $\epsilon\in\R$, let $\Phi_{\epsilon}:Q\rightarrow Q$ be the map $q\mapsto \Phi(q, \epsilon)$. Suppose $\L$ is invariant under the tangent lifted flow of $\Phi$, $L_i$ is invariant under $\Phi$ and $V_i$ is symmetric under $\Phi$, that is $V_i\circ \Phi_{\epsilon} = T\Phi_{\epsilon}\circ V_i$ for all $i = 1, \cdots, k$ and $\epsilon\in \R$. For every $\epsilon\in\R$, define $\Psi_{\epsilon}:\P Q\rightarrow \P Q$ by $\Psi_{\epsilon} = T\Phi_{\epsilon}\oplus T^*\Phi_{\epsilon}^{-1}$. Then $\pr{Q}\circ \Psi_{\epsilon} = \Phi_{\epsilon}$. 

\medskip

Let $\Gamma$ be a $\P Q$-valued admissible semimartingale, written in coordinates as $\Gamma_t = (q_t, v_t, p_t)$ and set $\Gamma_{\epsilon,t} = \Psi_{\epsilon}(\Gamma_t) = (q_{\epsilon,t},v_{\epsilon,t},p_{\epsilon,t})$. We note that:
\begin{align*}
        \int \G\del \Gamma_{\epsilon}&= \int \dotp{p_{\epsilon,t}}{\del q_{\epsilon,t}}\\
        &= \int \dotp{T^*\Phi_{\epsilon}\left(T^*\Phi_{\epsilon}\inv(p_t)\right)}{\del q_t}\\
        &= \int \dotp{p_t}{\del q_t}\\
        &= \int \G\del \Gamma
    \end{align*}
    and\begin{align*}
        E_j(\Gamma_{\epsilon,t}) &= E_j(q_{\epsilon,t}, v_{\epsilon,t}, p_{\epsilon,t}) = \begin{cases}
            \langle p_{\epsilon,t}, v_{\epsilon,t} \rangle - \L(q_{\epsilon,t}, v_{\epsilon,t}, p_{\epsilon,t}), & \text{if } j = 0, \\[8pt]
\langle p_{\epsilon,t}, V_j(q_{\epsilon,t})\rangle - L_j(q_{\epsilon,t}), & \text{if } j = 1, \dots, k.
        \end{cases}\\
        &= E_j(\Gamma_t).
    \end{align*}
Consequently, $\Ac{X}{\Gamma_{\epsilon}} = \Ac{X}{\Gamma}$ and hence $\D\Ac{X}{\Gamma} = 0$. Following the calculations done in the proof of Theorem \ref{localhamiltonpontryagin}, for $t\geq 0$ we have
\begin{align*}
    &\int_0^t\left\langle\delta \Gamma_s^{|t},\left(\int\left(\frac{\partial}{\partial q_s^{|t}}\left(\L\del X_s^0 +\right.\right.\right.\right.\\ &\left.\left.\sum_{i = 1}^k\left(L_i - \dotp{p_s^{|t}}{V_i(q_s^{|t})}\right)\circ dX^i_s\right) - \del \stopproc{p_s}{t}\right),\Sint{\left(p_s^{|t} - \pard{\L}{v_s^{|t}}\right)}{X^0_s},\\ &\left.\int\left(\del q_s^{|t} - v_s^{|t}\del X^0_s - \sum_{i = 1}^kV_i(q_s^{|t})\del X^i_s\right)\right\rangle \\&+ \dotp{p_t}{\delta q_t} - \dotp{p_0}{\delta q_0} = 0.
\end{align*}
Suppose $\Gamma$ solves the stochastic implicit Euler-Lagrange equations upto a maximal stopping time $\tau$. Then for all $t\leq \tau$, $\dotp{p_t}{\delta q_t} = \dotp{p_0}{\delta q_0}$. Hence $\dotp{p_t}{\delta q_t}$ is conserved along the solution $\Gamma$. Since $p_t = \pard{\L}{v_t}$, it follows that $\dotp{p_t}{\delta q_t} = \dotp{\theta_{\L}}{\delta q}$, where $\theta_{\L}$ is the pullback to $TQ$ of the Liouville 1-form on $\cotQ$ by $F\L$. Thus we have proven the following theorem:
\begin{theorem}
    Suppose $\Phi:Q\times\R\rightarrow Q$ is a smooth deterministic flow such that $\L$ is invariant under the tangent lift of $\Phi$, $L_i$ is invariant under $\Phi$ and $V_i$ is symmetric under $\Phi$ for all $i = 1, \cdots, k$. Let $\Gamma_t = (q_t, v_t, p_t)$ solves the stochastic implicit Euler-Lagrange equations and $q_{\epsilon,t} := \Phi_{\epsilon}(q_t) = \Phi_{\epsilon}(\pr{Q}(\Gamma_t))$. Then $\dotp{\theta_{\L}}{\delta q}$ is conserved along $\Gamma$.
\end{theorem}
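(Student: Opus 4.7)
The plan is to convert the symmetry hypothesis into an action-preserving deformation of $\Gamma$ and then read off the conservation law from the boundary terms that survive when the interior variation is absorbed by the stochastic implicit Euler-Lagrange equations.

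First I would define the deformation $\Gamma_{\epsilon,t} := \Psi_{\epsilon}(\Gamma_t)$ with $\Psi_{\epsilon} = T\Phi_{\epsilon}\oplus T^*\Phi_{\epsilon}^{-1}$, so that the corresponding variation is $\delta\Gamma_t = \frac{d}{d\epsilon}\big|_{\epsilon=0}\Psi_{\epsilon}(\Gamma_t)$ and projects on $Q$ to $\delta q_t = \xi_Q(q_t)$, where $\xi_Q$ is the infinitesimal generator of $\Phi$. The key point is that this variation is \emph{not} required to vanish at $t = 0$ or $t = T$.

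Next I would use the invariance/symmetry hypotheses to show that $\Ac{X}{\Gamma_{\epsilon}} = \Ac{X}{\Gamma}$. Because $T\Phi_{\epsilon}$ and $T^*\Phi_{\epsilon}^{-1}$ are fibrewise dual, the canonical pairing is preserved, giving $\Sint{\G}{\Gamma_{\epsilon}} = \Sint{\G}{\Gamma}$; the invariance of $\L$ under the tangent lift, of $L_i$ under $\Phi$, and of $V_i$ under $\Phi$ (combined again with duality of $T\Phi_{\epsilon}$ and $T^*\Phi_{\epsilon}^{-1}$) yields $E_j(\Gamma_{\epsilon,t}) = E_j(\Gamma_t)$ for every $j$. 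Differentiating in $\epsilon$ at zero therefore forces $\D\Ac{X}{\Gamma} = 0$.

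Now I would repeat the computation from the proof of Theorem \ref{localhamiltonpontryagin}, but without discarding the boundary terms supplied by Lemma \ref{variationinG}. Writing $t$ in place of the fixed time $T$, the variational identity reads
\begin{align*}
0 = \D\Ac{X}{\stopproc{\Gamma}{t}} &= \int_0^t\bigl\langle\delta\stopproc{\Gamma}{t}_s,\:\Xi(\stopproc{\Gamma}{t})_s\bigr\rangle + \dotp{p_t}{\delta q_t} - \dotp{p_0}{\delta q_0},
\end{align*}
where $\Xi(\stopproc{\Gamma}{t})$ is precisely the triple whose three components are the left-hand sides of the implicit Euler-Lagrange equations \eqref{sellocal}. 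If $\Gamma$ solves these equations up to a stopping time $\tau$, each component of $\Xi(\stopproc{\Gamma}{t})$ vanishes for $t \le \tau$, so the interior integral drops out and we conclude $\dotp{p_t}{\delta q_t} = \dotp{p_0}{\delta q_0}$ for all $t \le \tau$. Finally, the Legendre-type relation $p_t = \pard{\L}{v_t}$ (the third implicit equation) lets me rewrite this quantity as $\dotp{\theta_{\L}}{\delta q}$, which completes the proof.

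The main obstacle I anticipate is bookkeeping rather than analysis: the computation in Theorem \ref{localhamiltonpontryagin} was carried out for variations vanishing at the endpoints, so I need to reinstate the boundary contribution from $\d\G$ and propagate it through the variation of each $E_j$ term, and I need to perform the argument with a free upper limit $t$ rather than a fixed $T$. All analytic content (convergence of the variations, applicability of Lemma \ref{variationsofintegrals}) is supplied by the admissibility of $\Gamma$ and the fact that $\Psi_{\epsilon}$ is a smooth diffeomorphism of $\P Q$.
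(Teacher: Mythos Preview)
Your proposal is correct and follows essentially the same route as the paper: define the deformation via $\Psi_{\epsilon} = T\Phi_{\epsilon}\oplus T^*\Phi_{\epsilon}^{-1}$, use duality and the invariance hypotheses to show $\Ac{X}{\Gamma_{\epsilon}} = \Ac{X}{\Gamma}$, then rerun the computation of Theorem~\ref{localhamiltonpontryagin} with a free upper limit $t$ and the boundary terms retained, so that the stochastic implicit Euler--Lagrange equations kill the interior integral and leave $\dotp{p_t}{\delta q_t} = \dotp{p_0}{\delta q_0}$. The paper's argument and yours coincide step for step, including the final identification of $\dotp{p_t}{\delta q_t}$ with $\dotp{\theta_{\L}}{\delta q}$ via the Legendre relation.
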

\begin{remark}
    For a formulation of Noether's theorem using momentum maps, we refer to Lázaro-Camí and Ortega \cite{lco2}.
\end{remark}

\subsection{The Intrinsic Form of the Stochastic Hamilton-Pontryagin Principle}

We will now focus on the intrinsic form on the stochastic Hamilton-Pontryagin principle. We introduce an additional assumption here, namely that our semimartingales are obtained as solutions of Stratonovich equations on manifolds. To motivate this, we recall that in case of the deterministic Hamilton's principle, given a regular Lagrangian $\L\in\Cin(TQ)$ there exists a second order vector field $Z_{\L}$ such that the integral curves of $Z_{\L}$ project to solutions of the Euler-Lagrange equations (see Theorem 7.3.3 in Marsden and Ratiu \cite{marsden2}). Thus, for regular Lagrangians, finding a critical point of the action functional corresponds to selecting a particular vector field in $\VecF{TQ}$. We will extend this idea to the stochastic case as well, namely, we will show that under the assumption that a semimartingale solves a Stratonovich equation, finding a critical point of $\mathcal{S}_X$ corresponds to selecting a particular Stratonovich operator.

\medskip

Let $(e_0, \cdots, e_k)$ be a basis of $\R^{k+1}$. Let $\M$ be any regular submanifold of $\P Q$, $\Proj{T\M}$ and  $\Proj{\M}$ denote the same projections as $\Proj{T\P Q}$ and $\Proj{\P Q}$ respectively with $\P Q$ replaced by $\M$ and we restrict the other maps in \eqref{mapsinstrinsic} to $\M$. The generalized energies are now defined on $\M$ as opposed to $\P Q$. Given any Stratonovich operator $S\in\Strat{\R^{k+1}}{\M}$, any semimartingale $X=(X^0, \cdots, X^k)$ in $\R^{k+1}$ and a solution $\Gamma_X$ of $\del \Gamma = S(X,\Gamma)\del X$, from Eq. \eqref{hampontglobal} we have
\[\Ac{X}{\Gamma_X} = \Sinttime{0}{T}{\G}{\Gamma_X} - \sum_{j=0}^k\Sinttime{0}{T}{E_j(\Gamma_X)}{X^j}.\]
Given $x\in\R^{k+1}$, $y\in \M$ and $j \in \{0,\cdots, k\}$, suppose $z_j\in TT\M$, $\Proj{T\M} = S^{x,e_j}(y)$ and $T\Proj{\M} = w_y$, for some $w_y\in T\M$. Then \[dE_j(w_y) = dE_j\circ T\Proj{\M}(z_j) = (\Proj{\M})^*dE_j(S^{x,e_j})(z_j).\]
Now let $\epsilon\mapsto\Gamma_{X_\epsilon}$ be any deformation of $\Gamma_X$ and $Z^j$ be any $TT\M$-valued semimartingale such that $T\Proj{\M}(Z^j) = \delta \Gamma_X$ and $\Proj{T\M}(Z^j) = S^{x,e_j}(\Gamma_X)$. Then
\[dE_j(\delta\Gamma_X) = (\Proj{\M})^*dE_j(S^{X,e_j}(\Gamma_X))(Z^j)\]
and hence
\begin{equation}\label{variationinenergypart}
    \D\left(\sum_{j=0}^k\Sinttime{0}{T}{E_j(\Gamma_X)}{X^j}\right) = \sum_{j = 0}^k\Sinttime{0}{T}{(\Proj{\M})^*dE_j(S^{X,e_j}(\Gamma_X))(Z^j)}{X^j}.
\end{equation}
Next, by Lemma \ref{variationsofintegrals} 
\begin{align*}\D\Sinttime{0}{T}{\G}{\Gamma_X}&= \Sinttime{0}{T}{i_{\delta\Gamma_X}\d\G}{\Gamma_X} + \dotp{\G(\Gamma_{X_T})}{\delta\Gamma_{X_T}} - \dotp{\G(\Gamma_{X_0})}{\delta\Gamma_{X_0}}\\
&= \Sinttime{0}{T}{\Dual{S}(X,\Gamma_X)i_{\delta\Gamma_X}\d\G}{X} + \dotp{\G(\Gamma_{X_T})}{\delta\Gamma_{X_T}} - \dotp{\G(\Gamma_{X_0})}{\delta\Gamma_{X_0}}.
\end{align*}
The calculation of $\int_0^T S^{\vee}(X,\Gamma_X)i_{\delta\Gamma_X}\d\mathcal{G}\del X$ is related to the proof of Proposition 3.2 in Yoshimura and Marsden \cite{yoshimura06}. Let $\theta_{\cotQ}$ denote the Liouville 1-form on $\cotQ$ and $\Omega_{\cotQ} = -d\theta_{\cotQ}$. Denote by \[\Omega_{\cotQ}^{\flat}:T\cotQ\rightarrow T^*\cotQ\] the bundle map associated with $\Omega_{\cotQ}$. Also let $\theta_{T^*\cotQ}$ be the Liouville 1-form on $T^*\cotQ$ and set $\chi = (\Omega^{\flat})^*\theta_{T^*\cotQ}$. Note that $\chi$ is a 1-form on $T\cotQ$. We will show that, given any $x\in\R^{k+1}$, $y = (q,v,p)\in\M$, $w_y = (q,v,p,w_q,w_v,w_p)\in T\M$ and $z_j\in TT\M$ such that $T\Proj{\M}(z_j) = w_y$ and $\Proj{T\M}(z_j) = S^{x,e_j}(y)$, we have
\begin{equation}\label{importantequationforproof}
    \dotp{\Dual{S}(x,y)i_{w_y}\d\G}{e_j} = (T\pr{\cotQ})^*\chi(S^{x,e_j}(y))(z_j).
\end{equation}
Let
\begin{align*}
    u_y &= S^{x,e_j}(y) = (q,v,p,u_q,u_v,u_p)\in T\M,\\
    z_j &= (q,v,p,u_q,u_v,u_p, w_q,w_v,w_p, \tilde{w}_q, \tilde{w}_v,\tilde{w}_p).
\end{align*}
From the proof of Lemma \ref{variationinG} we get
\[\dotp{\Dual{S}(x,y)i_{w_y}\d\G}{e_j} = \d\G(w_y,u_y) = \dotp{w_p}{u_q} - \dotp{w_q}{u_p}.\]
On the other hand
\begin{align*}
    \theta_{T^*\cotQ}(\Omega_{\cotQ}^{\flat}\circ T\pr{\cotQ}(q,v,p,u_q,u_v,u_p)) &= \theta_{T^*\cotQ}(\Omega_{\cotQ}^{\flat}(q,p,u_q,u_p))\\
    &= \theta_{T^*\cotQ}(q,p,-u_p, u_q)\\
    &= -u_pdq+u_qdp.
\end{align*}
Since $T_{u_y}T\pr{\cotQ}(z_j) = (q,p,u_q,u_p,w_q,w_p,\tilde{w}_q,\tilde{w}_p)$, it follows that
\[T\Omega_{\cotQ}^{\flat}(T_{u_y}T\pr{\cotQ})(z_j) = (q,p,-u_p,u_q,w_q,w_p,-\tilde{w}_p, \tilde{w}_q)\]
Note that 
\begin{align*}
    (T\pr{\cotQ})^*\chi(S^{x,e_j}(y))(z_j)&= \theta_{T^*\cotQ}(\Omega_{\cotQ}^{\flat}\circ T\pr{\cotQ}(u_y))\cdot T\Omega_{\cotQ}^{\flat}(T_{u_y}T\pr{\cotQ}(z_j)),
\end{align*}
and
\begin{align*}
\theta_{T^*\cotQ}(\Omega_{\cotQ}^{\flat}\circ T\pr{\cotQ}(u_y))\cdot T\Omega_{\cotQ}^{\flat}(T_{u_y}T\pr{\cotQ}(z_j)) &=- \dotp{w_q}{u_p}+\dotp{w_p}{u_q}\\ &= \dotp{\Dual{S}(x,y)i_{w_y}\d\G}{e_j}.\end{align*}
This proves our claim. 

\medskip

Consequently, given any $a = (a^0, \cdots, a^k)\in\R^{k+1}$
\begin{align*}
    \langle S^{\vee}(x,y)i_{w_y}\d\mathcal{G}, a\rangle &= 
    \sum_{j = 0}^k a^j \langle S^{\vee}(x,y)i_{w_y}\d\mathcal{G},e_j\rangle\\
    &= \sum_{j = 0}^k a^j(T\pr{\cotQ})^*\chi(S^{x,e_j}(y))(z_j)
\end{align*}
where $z_j\in TT\M$, $\Proj{T\M}(z_j) = S^{x,e_j}(y)$ and $T\Proj{\M}(z_j) = w_y$.

\medskip

Therefore, given any deformation $\epsilon\mapsto \Gamma_{X_\epsilon}$ and semimartingales $Z^0,\cdots, Z^k$ in $TT\M$ over $S^{X,e_j}(\Gamma_X)$ such that $T\Proj{\M}(Z^j) = \delta \Gamma_X$ we have
\begin{align*}
    \D\Sinttime{0}{T}{\G}{\Gamma_X}&=
\Sinttime{0}{T}{\Dual{S}(X,\Gamma_X)i_{\delta\Gamma_X}\d\G}{X} + \dotp{\G(\Gamma_{X_T})}{\delta\Gamma_{X_T}} - \dotp{\G(\Gamma_{X_0})}{\delta\Gamma_{X_0}}\\
&= \sum_{j = 0}^k \Sinttime{0}{T}{(T\pr{\cotQ})^*\chi(S^{X,e_j}(\Gamma_X))(Z^j)}{X^j}\\&+\dotp{\G(\Gamma_{X_T})}{\delta\Gamma_{X_T}} - \dotp{\G(\Gamma_{X_0})}{\delta\Gamma_{X_0}}.
\end{align*}
We have $\langle \mathcal{G}(\Gamma_t), \delta \Gamma_t\rangle = \dotp{p_t} {\delta q_t}$, where $\Gamma_t = (q_t, v_t, p_t)$ in coordinates. Assuming that $T\pr{Q}(\delta \Gamma) = 0$ at $t = 0,T$, we have \[\left\langle \mathcal{G}(\Gamma_T),\delta \Gamma_T\right\rangle = 0 = \left\langle\mathcal{G}(\Gamma_0),\delta \Gamma_0\right\rangle.\]
In that case
\[\D\Sinttime{0}{T}{\G}{\Gamma_X} =\sum_{j = 0}^k \Sinttime{0}{T}{(T\pr{\cotQ})^*\chi(S^{X,e_j}(\Gamma_X))(Z^j)}{X^j}.\]
We summarize this discussion in the following lemma:
\begin{lemma}\label{importantlemmaintrinsic}
    Let $S\in\Strat{\R^{k+1}}{\M}$ where $\M$ is a regular submanifold of $\P Q$. For every semimartingale $X = (X^0, \cdots, X^k)\in \sem{\R^{k+1}}$, if $\Gamma_X$ solves $\del \Gamma = S(X,\Gamma)\del X$ and $\Gamma_X$ is admissible, then
    \[\D\Ac{X}{\Gamma_X} = \D\left[\Sinttime{0}{T}{\G}{\Gamma_X} - \sum_{j= 0}^k\Sinttime{0}{T}{E_j(\Gamma_X)}{X^j}\right] = 0\]
    for all variations $\epsilon \mapsto \Gamma_{X_\epsilon}$ with $T\pr{Q}(\delta\Gamma_{X_0}) = T\pr{Q}(\delta\Gamma_{X_T}) = 0$, if and only if 
    \begin{equation}\label{globalvariationofaction}\sum_{j=0}^k\Sinttime{0}{T}{\left[(T\pr{\cotQ})^*\chi(S^{X,e_j}(\Gamma_X)) - (\Proj{\M})^*dE_j(S^{X,e_j}(\Gamma_X))\right](Z^j)}{X^j} = 0\end{equation}
    for arbitrary $TT\M$-valued semimartingales $Z^0,\cdots, Z^k$ over $S^{X,e_j}(\Gamma_X)$ such that $T\left(\pr{Q}\circ \Proj{\M}\right)(Z^j) = 0$ at $t = 0$ and $t = T$.
\end{lemma}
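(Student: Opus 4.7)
The plan is to assemble the two variation computations already carried out in the paragraphs immediately preceding the lemma into a single formula and observe that the resulting expression is literally the left-hand side of \eqref{globalvariationofaction}. The ``if and only if'' then collapses to the tautology that a quantity vanishes iff itself vanishes, modulo a routine translation between deformation-variations and lifts $Z^j$. From \eqref{hampontglobal}, $\Ac{X}{\Gamma_X}$ splits as $\Sinttime{0}{T}{\G}{\Gamma_X}$ minus $\sum_j \Sinttime{0}{T}{E_j(\Gamma_X)}{X^j}$; the energy sum has already been treated in \eqref{variationinenergypart}, so only the $\G$-integral needs further work.

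The concrete steps are as follows. First, I apply Lemma \ref{variationsofintegrals}(2) to $\Sinttime{0}{T}{\G}{\Gamma_X}$. In local coordinates $\dotp{\G(\Gamma_t)}{\delta\Gamma_t} = \dotp{p_t}{\delta q_t}$, so the hypothesis $T\pr{Q}(\delta\Gamma_{X_0}) = T\pr{Q}(\delta\Gamma_{X_T}) = 0$ kills both endpoint contributions. Then I substitute $\del \Gamma_X = S(X,\Gamma_X)\del X$ to rewrite the interior term as $\Sinttime{0}{T}{\Dual{S}(X,\Gamma_X) i_{\delta\Gamma_X}\d\G}{X}$, expand against the basis $(e_0,\ldots,e_k)$, and apply the pointwise identity \eqref{importantequationforproof} with $y = \Gamma_X$, $w_y = \delta\Gamma_X$, $x = X$ and any lift $Z^j$ satisfying $\Proj{T\M}(Z^j) = S^{X,e_j}(\Gamma_X)$ and $T\Proj{\M}(Z^j) = \delta\Gamma_X$. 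This converts the $\G$-piece into $\sum_j \Sinttime{0}{T}{(T\pr{\cotQ})^*\chi(S^{X,e_j}(\Gamma_X))(Z^j)}{X^j}$. Subtracting the energy contribution from \eqref{variationinenergypart} leaves exactly
\[\D\Ac{X}{\Gamma_X} = \sum_{j=0}^k \Sinttime{0}{T}{\bigl[(T\pr{\cotQ})^*\chi(S^{X,e_j}(\Gamma_X)) - (\Proj{\M})^*dE_j(S^{X,e_j}(\Gamma_X))\bigr](Z^j)}{X^j},\]
which is precisely the left-hand side of \eqref{globalvariationofaction}. The forward direction of the equivalence reads off $\D\Ac{X}{\Gamma_X} = 0$ from the hypothesized vanishing; for the reverse, admissibility of $\Gamma_X$ lets me realize any allowable $\delta\Gamma_X = T\Proj{\M}(Z^j)$ as the variation of some deformation, after which the displayed formula yields the vanishing of the integral.

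The main obstacle I anticipate is checking that the integrand of \eqref{globalvariationofaction} depends on $Z^j$ only through $T\Proj{\M}(Z^j)$, so that the phrase ``arbitrary $Z^j$'' in the lemma is compatible with the specific fiber lifts a single deformation supplies. For the $\chi$-term this invariance is built into \eqref{importantequationforproof}, whose left-hand side sees only $w_y$; for the $dE_j$-term it follows because $(\Proj{\M})^*dE_j$ is pulled back from a $1$-form on $\M$. Once this is recorded, admissibility of $\Gamma_X$ sets up a bijection between deformation-variations meeting the endpoint hypothesis and common projections $T\Proj{\M}(Z^j)$ meeting the same hypothesis, and the equivalence closes.
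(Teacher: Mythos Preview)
Your proposal is correct and follows essentially the same route as the paper: the lemma is stated as a summary of the discussion preceding it, which assembles \eqref{variationinenergypart} with the computation of $\D\Sinttime{0}{T}{\G}{\Gamma_X}$ via Lemma~\ref{variationsofintegrals}(2), the Stratonovich equation, and the pointwise identity \eqref{importantequationforproof}, after killing the boundary terms by the endpoint hypothesis. Your explicit check that the integrand depends on $Z^j$ only through $T\Proj{\M}(Z^j)$, together with the admissibility argument closing the bijection between deformations and lifts, is a welcome clarification that the paper leaves implicit.
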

\begin{definition}
    Given $y_0\in \M$ and $j = 0, \cdots, k$ let $y_j(t)$ be a curve in $\M$ satisfying
    \begin{equation}\label{deterministicHamiltonPontryagin}
        (T\pr{\cotQ})^*\chi(y_j(t),\dot{y}_j(t)) = (\Proj{\M})^*dE_j(y_j(t), \dot{y}_j(t))
    \end{equation}
    with $y_j(0) = y_0$. Let $S_{HP}\in\Strat{\R^{k+1}}{\M}$ be defined by 
    \[S_{HP}(x_0,y_0)(a) = \sum_{j = 0}^k a^j\dot{y}_j(0)\in T_{y_0}\M\]
    for every $x_0\in\R^{k+1}$ and $a=(a^0, \cdots, a^k)\in \R^{k+1}\cong T_{x_0}\R^{k+1}$. We will call $S_{HP}$ a \textbf{Hamilton-Pontryagin Stratonovich operator}.  
\end{definition}
\begin{remark}
    From Yoshimura and Marsden \cite{yoshimura06}, if $j = 0$ then Eq. \eqref{deterministicHamiltonPontryagin} is just the deterministic implicit Euler-Lagrange equations for $\L$. Since this means that the Legendre transform holds, it follows that $S_{HP}^{x,e_0}$, and hence $S_{HP}$, is well-defined if and only if $\M$ is the submanifold $\mathcal{K} = TQ\oplus F\L(TQ)$. For $j = 1, \cdots, k$, in local coordinates Eq. \eqref{deterministicHamiltonPontryagin} reads
    \begin{equation}\label{SHPi}\dot{q} = V_j(q),\:\:\:\dot{p} = \pard{}{q}\left(L_j(q) - \dotp{p}{V_j(q)}\right).\end{equation}
\end{remark}
By definition of $S_{HP}$, if $\M = \mathcal{K}$ and $S = S_{HP}$ then Eq. \eqref{globalvariationofaction} is satisfied, so $\D\Ac{X}{\Gamma_X} = 0$. We now prove the converse.

\medskip

First we make an important observation. The Stratonovich operator $S$ is a deterministic object that is defined independently of the semimartingale $X$. Since the equivalence in Lemma \ref{importantlemmaintrinsic} is true for every semimartingale $X\in\sem{\R^{k+1}}$ and a solution $\Gamma_X$ of $\del\Gamma = S(X,\Gamma)\del X$ that is admissible, it must be true for a deterministic semimartingale of the form $X_t = X_0(t):= x_0+te_0$, where $x_0\in\R^{k+1}$ is arbitrary. In this case the solution of $\del\Gamma = S(X,\Gamma)\del X = S^{X_0(t),e_0}(\Gamma)dt$ is a deterministic smooth curve in $\M$ that we denote by $\gamma_0(t)$. Given $y_0\in \M$, suppose that $\gamma_0(0)= y_0$.  Note that $\gamma_0(t)$ solves \[\dot{\gamma}_{0}(t) = S^{X_0(t),e_0}(\gamma_0(t)) = S(x_0+te_0,\gamma_0(t))(e_0).\]
By Lemma \ref{importantlemmaintrinsic} we have
\[\int_0^T{\left[(T\pr{\cotQ})^*\chi(S^{x_0+te_0,e_0}(\gamma_0(t))) - (\Proj{\M})^*dE_0(S^{x_0+te_0,e_0}(\gamma_j(t)))\right](z^0(t))}dt = 0\]
for all smooth curves $z_0(t)$ in $TT\M$ such that $\Proj{T\M}(z_0(t)) = S^{x_0+te_0,e_0}(\gamma_0(t)) = \dot{\gamma}_0(t)$ and $T\left(\pr{Q}\circ \Proj{\M}\right)(z_0(t)) = 0$ at $t = 0$ and $t = T$. By the (deterministic) fundamental theorem of the calculus of variations we have
\[(T\pr{\cotQ})^*\chi(S^{x_0+te_0,e_0}(\gamma_0(t))) = (\Proj{\M})^*dE_0(S^{x_0+te_0,e_0}(\gamma_0(t))).\]
Since $\dot{\gamma}_{0}(t) = S^{X_0(t),e_0}(\gamma_0(t))$ we have
\[(T\pr{\cotQ})^*\chi(\gamma_0(t),\dot{\gamma}_0(t)) = (\Proj{\M})^*dE_0(\gamma_0(t),\dot{\gamma}_0(t)).\]
This equation has a solution provided $\M = \mathcal{K}$. Moreover, by definition of $S_{HP}$
\[S_{HP}(x_0,y_0)(e_0) = \dot{\gamma}_0(0) = S(x_0,y_0)(e_0).\]
A similar argument with $e_0$ replaced by $e_j$ shows that $S(x_0, y_0)(e_j) = S_{HP}(x_0, y_0)(e_j)$ for all $j = 0, \cdots, k$. Thus $S = S_{HP}$. 
\begin{theorem}
    Let $S\in\Strat{\R^{k+1}}{\M}$ where $\M$ is a regular submanifold of $\P Q$. For every semimartingale $X = (X^0, \cdots, X^k)\in\sem{\R^{k+1}}$, if $\Gamma_X$ solves $\del \Gamma = S(X, \Gamma)\del X$ and $\Gamma_X$ is admissible then $\D\Ac{X}{\Gamma_X} = 0$ for all deformations $\epsilon\mapsto \Gamma_{X_\epsilon}$ satisfying $T\pr{Q}(\delta {\Gamma_{X_0}}) = T\pr{Q}(\delta {\Gamma_{X_T}}) = 0$ if and only if $\M = \mathcal{K}$ and $S = S_{HP}$.

    \medskip
    
Moreover, the stochastic implicit Euler-Lagrange equations are given by \[\del (\pr{\cotQ}(\Gamma^{|T})) = T\pr{\cotQ}\left(S_{HP}(X, \Gamma^{|T})\right)\del X\] on the submanifold $\mathcal{K}$.
\end{theorem}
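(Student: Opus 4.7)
The plan is to view this theorem as the global counterpart of Theorem \ref{localhamiltonpontryagin}, argued at the level of Stratonovich operators rather than in local coordinates. The main engine is Lemma \ref{importantlemmaintrinsic}, which converts $\D\Ac{X}{\Gamma_X}=0$ into the pairing identity \eqref{globalvariationofaction} against arbitrary $TT\M$-valued test semimartingales $Z^j$ over $S^{X,e_j}(\Gamma_X)$ with $T(\pr{Q}\circ\Proj{\M})(Z^j)$ vanishing at $t=0$ and $t=T$. The task then reduces to showing that this pairing identity is equivalent to $\M = \mathcal{K}$ and $S = S_{HP}$.

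For the forward (``if'') direction, I would assume $\M = \mathcal{K}$ and $S = S_{HP}$. By definition of $S_{HP}$, the integral curve $y_j(t)$ of the vector field $S^{x,e_j}_{HP}$ through any $y_0 \in \M$ satisfies \eqref{deterministicHamiltonPontryagin}, which asserts precisely that the 1-form $(T\pr{\cotQ})^*\chi - (\Proj{\M})^*dE_j$ annihilates $S^{x,e_j}_{HP}(y_0)$ as a tangent vector to $T\M$. Pairing with any admissible $Z^j$ therefore produces a zero integrand, and \eqref{globalvariationofaction} holds trivially.

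For the converse, I would follow the deterministic specialization argument sketched in the paragraph preceding the statement. Fixing $j \in \{0, \ldots, k\}$, $x_0 \in \R^{k+1}$, and $y_0\in\M$, I would apply the hypothesis with the deterministic driving semimartingale $X_t = x_0 + te_j$; then $\Gamma_X = \gamma_j(t)$ is the smooth integral curve of $S^{\,\cdot\,,e_j}$ through $y_0$, and \eqref{globalvariationofaction} collapses to a classical deterministic variational identity along $\gamma_j$. The ordinary fundamental lemma of the calculus of variations, applied to smooth test curves $z_j(t) \in TT\M$ with $T(\pr{Q}\circ\Proj{\M})(z_j)$ vanishing at the endpoints, forces $\gamma_j$ to satisfy \eqref{deterministicHamiltonPontryagin}. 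For $j = 0$ this is the deterministic implicit Euler-Lagrange system, which as noted after the definition of $S_{HP}$ requires $\M = \mathcal{K}$. Comparing $\dot\gamma_j(0) = S(x_0,y_0)(e_j)$ with $S_{HP}(x_0,y_0)(e_j)$ and extending by linearity in the $\R^{k+1}$ slot yields $S = S_{HP}$.

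For the \emph{moreover} statement I would unpack $T\pr{\cotQ}\circ S_{HP}$ in a local chart. The $j=0$ component contributes $\del q = v\del X^0$ and $\del p = \frac{\partial \L}{\partial q}\del X^0$ together with the Legendre constraint $p = \frac{\partial \L}{\partial v}$ valid on $\mathcal{K}$; the $j\geq 1$ components are read off directly from Eq. \eqref{SHPi}. Summing and rewriting the Stratonovich equation intrinsically yields $\del(\pr{\cotQ}\Gamma^{|T}) = T\pr{\cotQ}(S_{HP}(X,\Gamma^{|T}))\del X$ on $\mathcal{K}$, which matches \eqref{sellocal} on the cotangent side. The main obstacle I anticipate is the bookkeeping in the converse direction: one must verify that the class of test semimartingales $Z^j$ allowed by the admissibility of $\Gamma_X$ is, upon restriction to the deterministic driving $X_t = x_0 + te_j$, rich enough to probe every tangent direction of $T\M$ appearing in \eqref{deterministicHamiltonPontryagin}, and that the endpoint constraint $T(\pr{Q}\circ\Proj{\M})(Z^j)=0$ does not secretly obstruct the reduction to the classical fundamental lemma.
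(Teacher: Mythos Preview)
Your proposal is correct and follows essentially the same route as the paper: the ``if'' and ``only if'' directions are argued exactly as in the discussion preceding the theorem (invoking Lemma \ref{importantlemmaintrinsic}, then specializing to the deterministic drivers $X_t = x_0 + te_j$ and applying the classical fundamental lemma), and your treatment of the ``moreover'' clause via the local expressions for $S_{HP}^{x,e_0}$ and Eq.~\eqref{SHPi} matches the paper's proof verbatim. The concern you flag at the end---whether the admissible $Z^j$ are rich enough after deterministic specialization---is legitimate, but the paper does not address it either; it simply invokes the deterministic fundamental lemma without further comment.
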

\begin{proof}
   It only remains to show that the stochastic implicit Euler-Lagrange equations are given by\[\del (\pr{\cotQ}(\Gamma^{|T})) = T\pr{\cotQ}\left(S_{HP}(X, \Gamma^{|T})\right)\del X\] on $\mathcal{K}$. Let $\Gamma_t = (q_t,v_t,p_t)$ in terms of local coordinates on $\P Q$. The restriction to $\mathcal{K}$ implies that $\left(p_t^{|T} - \pard{\L}{v_t^{|T}}\right)\del X^0_t = 0$. Given any $x\in\R^{k+1}$ and $a = (a^0, \cdots, a^k)\in \R^{k+1}\cong T_x\R^{k+1}$, suppose that in local coordinates, we have \[S_{HP}(x,y)(a) = \sum_{j = 0}^ka^jS_{HP}^{x, e_j}(q,v,p) = \sum_{j = 0}^ka^j(u_{q_j}, u_{v_j}, u_{p_j}).\]Using the local form of the deterministic implicit Euler-Lagrange equations for $j = 0$ and using Eq. \eqref{SHPi} for $j = 1, \cdots, k$, we have
   \begin{align*}T\pr{\cotQ}\left(S_{HP}(x, y)(a)\right) &= \sum_{j = 0}^ka^jT\pr{\cotQ}(u_{q_j}, u_{v_j}, u_{p_j})\\ &= \sum_{j = 0}^ka^j(u_{q_j}, u_{p_j})\\ &= a^0\left(v, \pard{\L}{q}\right) + \sum_{j = 1}^ka^j\left(V_j(q), \pard{}{q}\left(L_j(q) - \dotp{p}{V_j(q)}\right)\right). \end{align*}Thus the local form of the equation $\del (\pr{\cotQ}(\Gamma^{|T})) = T\pr{\cotQ}\left(S_{HP}(X, \Gamma^{|T})\right)\del X$ is
   \begin{align*}
       \circ dq_t^{|T} &= v_t^{|T}\circ dX^0_t + \sum_{i=1}^kV_i(q_t^{|T})\circ dX^i_t\\
    \circ dp_t^{|T} &= \frac{\partial}{\partial q_t^{|T}}\left(\L\del X_t^0 + \sum_{i = 1}^k\left(L_i(q^{|T}) - \dotp{p_t^{|T}}{V_i(q_t^{|T})}\right)\circ dX^i_t\right),
   \end{align*}
   which, together with the equation $\left(p_t^{|T} - \pard{\L}{v_t^{|T}}\right)\del X^0_t = 0$ gives the stochastic implicit Euler-Lagrange equations.
\end{proof}
\begin{remark}
    This method of reformulating the problem of determining a critical point of a stochastic action to determining a Stratonovich operator can be applied to other stochastic action principles as well, for instance, the stochastic Hamilton's principle in phase space described in Lázaro-Camí and Ortega \cite{lco1}.
\end{remark}

\section*{Acknowledgments}
The author is grateful to Cristina Stoica, Tanya Schmah, Ana Bela Cruzeiro and Jean-Claude Zambrini for valuable comments and insights. The author is also thankful to the anonymous referees for their suggestions. This work is supported by a Ph.D. scholarship from the University of Ottawa.

\end{document}